\numberwithin{equation}{section}
\pgfplotsset{compat=newest}
	\newcommand{\KU}{{\sc KnapExp}\xspace}
	\newcommand{\offKU}{{\sc OfflineKnapExp}\xspace}
	\newcommand{\items}{\ensuremath{\mathcal{I}}\xspace}
	\newcommand{\density}{d}
	\newcommand{\odensity}{\bar{d}}
    \theoremstyle{definition}
    \newtheorem{theorem}{Theorem}
    \newtheorem{remark}{Remark}
    \newtheorem{claim}{Claim}
    \newtheorem{definition}{Definition}
    \newtheorem{property}{Property}
\begin{document}

\date{}
\title{On the Complexity of Knapsack under Explorable Uncertainty: Hardness and Algorithms\thanks{An extended abstract of this paper will appear in the proceedings of \emph{The European Symposium on Algorithms (ESA 2025)}}}

\author{%
Jens Schl{\"o}ter\thanks{Centrum Wiskunde \& Informatica (CWI), Amsterdam, The Netherlands, \texttt{Jens.Schloter@cwi.nl}}
}

\maketitle

\begin{abstract}
    In the \emph{knapsack problem under explorable uncertainty}, we are given a knapsack instance with uncertain item profits. Instead of having access to the precise profits, we are only given \emph{uncertainty intervals} that are guaranteed to contain the corresponding profits. The actual item profit can be obtained via a \emph{query}. The goal of the problem is to adaptively query item profits until the revealed information suffices to compute an optimal (or approximate) solution to the underlying knapsack instance. Since queries are costly, the objective is to minimize the number of queries.
 
    In the offline variant of this problem, we assume knowledge of the precise profits and the task is to compute a query set of minimum cardinality that a third party without access to the profits could use to identify an optimal (or approximate) knapsack solution. We show that this offline variant is complete for the second-level of the polynomial hierarchy, i.e., $\Sigma_2^p$-complete, and cannot be approximated within a non-trivial factor unless $\Sigma_2^p = \Delta_2^p$. Motivated by these strong hardness results, we consider a \enquote{resource-augmented} variant of the problem where the requirements on the query set computed by an algorithm are less strict than the requirements on the optimal solution we compare against. More precisely, a query set computed by the algorithm must reveal sufficient information to identify an approximate knapsack solution, while the optimal query set we compare against has to reveal sufficient information to identify an optimal solution. We show that this resource-augmented setting allows interesting non-trivial algorithmic results.
\end{abstract}

\section{Introduction}
 
The field of \emph{explorable uncertainty} considers optimization problems with uncertainty in the numeric input parameters. Initially, the precise values of the uncertain parameters are unknown. Instead, for each uncertain parameter, we are given an \emph{uncertainty interval} that contains the precise value of that parameter. Each uncertain parameter can be queried to reveal its precise value. The goal is to adaptively query uncertain parameters until we have sufficient information to solve the underlying optimization problem.

 In this paper, we consider \emph{knapsack under explorable uncertainty} (\KU) with uncertain item profits. That is, we are given a set of items $\items$ and a knapsack capacity $B \in \mathbb{N}$. Each item $i \in \items$ has a known weight $w_i \in \mathbb{N}$ and an uncertain profit $p_i \in \mathbb{R}$ that is initially hidden within the known uncertainty interval $I_i$, i.e., $p_i \in I_i$. A \emph{query} of an item $i$ reveals the profit $p_i$. Our goal is to compute a set $P \subseteq \items$ of items with $w(P):= \sum_{i \in P} w_i \le B$ that maximizes the profit $p(P) := \sum_{i \in P} p_i$. We refer to this problem as the \emph{underlying knapsack problem}.
 Since the profits are initially hidden within their uncertainty intervals, we do not always have sufficient information to compute an optimal or even approximate solution for the underlying knapsack problem. Instead, an algorithm for \KU can adaptively query items to reveal their profits until the revealed information suffices to compute an optimal solution for the underlying knapsack instance. As queries are costly, the goal is to minimize the number of queries.

 The \emph{offline version}, sometimes also called \emph{verification problem}, of knapsack under explorable uncertainty (\offKU) assumes full initial access to the profits $p_i$ and asks for a \emph{query set} $Q \subseteq \items$ of minimum cardinality such that access to the profits of the items in $Q$ and access to the uncertainty intervals of the items in $\items \setminus Q$ suffices to compute an optimal solution to the underlying knapsack instance, independent of what the precise profits of the items in $\items \setminus Q$ are. In this work, we mainly focus on studying the offline version of knapsack under explorable uncertainty.
 Most commonly, problems under explorable uncertainy are studied in an \emph{adversarial online setting}, where the uncertain values are unknown, query outcomes are returned in a worst-case manner and algorithms are compared against the optimal solution for the corresponding offline version by using competitive analysis. The complexity of the offline version is a natural barrier for efficiently solving the online version. 

 So far, most problems that have been studied under explorable uncertainty have an underlying problem that belongs to the complexity class P, i.e., can be solved in polynomial time. The seminal work by Kahan~\cite{kahan91queries} on computing the minimum in a set of uncertain values was followed by works on computing the $k$-th smallest uncertain value~\cite{kahan91queries,feder03medianqueries}, computing a minimum spanning tree with uncertain edge weights~\cite{erlebach08steiner_uncertainty,erlebach14mstverification,megow17mst,Erlebach22Learning,MerinoS19,MathwieserC24}, sorting~\cite{halldorsson19sortingqueries,ErlebachLMS23}, shortest path~\cite{feder07pathsqueires}, finding the cheapest set in a given family of sets~\cite{erlebach16cheapestset,MegowS23}, simple geometric problems~\cite{bruce05uncertainty}, stable matchings~\cite{BampisDEMST24}, and other selection problems~\cite{ErlebachLMS23,BampisDEdLMS21}. If we remove the explorable uncertainty aspect, then all of these problems can be solved in polynomial time. 
 
 Even tough these underlying problems are in P, the offline versions of the corresponding problems under explorable uncertainty are often NP-hard. For instance, the offline version of identifying the set of maximal points under explorable uncertainty is NP-hard~\cite{CharalambousH13}, the offline version of the selection problem in~\cite{ErlebachLMS23,BampisDEdLMS21} is NP-hard, and the offline version of the minimum-spanning tree problem under vertex uncertainty is NP-hard~\cite{erlebach14mstverification}. The offline version of selecting the cheapest set is NP-hard~\cite{erlebach16cheapestset} and even hard to approximate within a factor of $o(\log m)$, where $m$ is the number of sets~\cite{MegowS23}. Similarly, the offline version of stable matching under uncertainty is NP-hard to approximate~\cite{BampisDEMST24}. For all of these problems, adding the layer of explorable uncertainty increases the complexity from polynomial-time solvable to NP-hard and leads to interesting algorithmic challenges even tough the underlying problems are easy. However, this observation also raises the following question: 
 
\begin{quote}
    If the underlying problem is already NP-hard, does adding the layer of explorable uncertainty still increase the complexity?     
\end{quote}
 
As a first main result, we answer this question in the affirmative for the offline version of knapsack under explorable uncertainty. More precisely, we show that \offKU is complete for the second level of the polynomial hierarchy, i.e., $\Sigma_2^p$-complete. We even show that, under a certain conjecture ($\Sigma_2^p \not= \Delta_2^p$), no $n^{1-\epsilon}$-approximation is possible for any $\epsilon > 0$, where $n$ is the number of items. The latter can be seen as a natural next step from the inapproximability result given in~\cite{MegowS23}: They show that approximating the offline version of the cheapest set problem is hard to approximate within a factor of $o(\log m)$ by exploiting that it is equivalent to solving a covering integer linear program (ILP) with $m$ constraints, whereas we show our inapproximability result by exploiting that offline \KU can be represented as a covering ILP with an exponential number of constraints. Unfortunately, these extremely strong hardness results pose further challenges:
 
\begin{quote}
    If the hardness of the offline version prevents any non-trivial approximation, is there any hope for interesting algorithmic results in the offline, online or stochastic version?
\end{quote}

Our approach for answering this question is to consider a form of \emph{resource augmentation}. More precisely, we relax the requirements on a solution $Q$ for \offKU: Instead of requiring that querying $Q$ reveals sufficient information to identify an optimal solution for the underlying knapsack problem, we only require sufficient information to identify an $\alpha$-approximate solution. Unfortunately, we can show that, unless P=NP, there is no non-trivial approximation for this relaxed problem variant if we compare against an optimal solution for the relaxed problem variant. However, as a second main result, we show that non-trivial algorithmic results are possible if the requirements on the algorithm's solution are less strict than the requirements on the optimal solution we compare against; we make this more precise in the next section.
  
\subsection{Problem Definition} An \emph{instance} $\mathcal{K}$ of \KU and \offKU is a quintuple $\mathcal{K} = (\items, B, w, p, \mathcal{A})$, where $\items = \{1,\ldots, n\}$ is a set of $n$ items, $B \in \mathbb{N}$ is the knapsack capacity, $w$ is the weight vector with $w_i \in \mathbb{N}_{\le B}$ for all items $i \in \items$, $p$ is the profit vector with $p_i \in \mathbb{R}_{\ge 0}$ for all $i \in \items$, and $\mathcal{A} = \{I_1,\ldots, I_n\}$ is the set of uncertainty intervals such that $p_i \in I_i$ for all $i \in \items$. The quadruple $(\items, B, w, p)$ characterizes the underlying knapsack problem.

As is common in the area of explorable uncertainty, we assume that each uncertainty interval $I_i$ is either open or \emph{trivial}. That is, we either have $I_i = (L_i, U_i)$ for a \emph{lower limit} $L_i$ and an \emph{upper limit} $U_i$, or $I_i =\{p_i\}$. In the latter case, we call both the item $i$ and the uncertainty interval $I_i$ \emph{trivial} and define $U_i=L_i = p_i$. All items that are not trivial are called \emph{non-trivial}. We use $\items_T$ to refer to the set of trivial items. A \emph{query} of an item $i$ reveals the profit $p_i$ and can be seen as replacing the uncertainty interval $I_i=(L_i,U_i)$ with $I_i = \{p_i\}$.

In \offKU, all input parameters are known to the algorithm, while in \KU the profits $p_i$ are initially uncertain. Both problems ask for a \emph{feasible query set} $Q \subseteq \items$ of minimum cardinality. Intuitively as query set $Q \subseteq \items$ is feasible if the revealed information suffices to identify an optimal solution to the underlying knapsack problem \emph{and} to determine the profit of such a solution. We proceed by making this definition more formal.

\paragraph{Packings and Feasible Query Sets.} To formally define feasible query sets, we first define \emph{packings}. A subset of items  $P \subseteq \items$ is a \emph{packing} if $\sum_{i \in P} w_i \le B$. That is, packings are feasible solutions to the underlying knapsack problem. For $P \subseteq \items$ let $p(P) = \sum_{i \in P} p_i$ denote the profit of $P$. We call a packing \emph{optimal} if it maximizes the profit over all packings. We usually use $P^*$ to refer to an optimal packing and $p^* := p(P^*)$ to refer to the \emph{optimal profit}. 

For each packing $P \subseteq \items$ define $U_P := \sum_{i \in P} U_i$, i.e., the term $U_P$ describes an upper limit on the maximum possible profit the packing could potentially have. Note that $U_p$ can be computed even without access to the profits.
By querying items in $P$, the upper limit $U_P$ decreases as we gain more information and can replace the non-trivial uncertainty interval $I_i = (L_i,U_i)$ with $I_i = \{p_i\}$ after we query $i$ and learn the profit $p_i$. 
For $Q \subseteq \items$, we use $U_i(Q)$ to denote the upper limit of $i$ after querying $Q$, i.e., $U_i(Q) = p_i$ if $i \in Q$ and $U_i(Q) = U_i$  otherwise. The upper limit $U_P(Q)$ of packing $P$ after querying a set $Q \subseteq \items$, is 
\[
U_P(Q) := \sum_{i \in P} U_i(Q) = \sum_{i \in P\setminus Q} U_i + \sum_{i \in P \cap Q} p_i = \sum_{i \in P} U_i - \sum_{i \in P \cap Q} (U_i - p_i) = U_P - \sum_{i \in P \cap Q} (U_i - p_i).
\] 

\begin{definition}
    \label{def:feasible}
    A query set $Q \subseteq \items$ is \emph{feasible} if the following two conditions hold:
    \begin{enumerate}
        \item There is a packing $P \subseteq Q \cup \items_T$ with $p(P) = p^*$.
        \item $U_P(Q) \le p^*$ holds for every packing $P \subseteq \items$.
    \end{enumerate}
\end{definition}

The first condition of~\Cref{def:feasible} ensures that querying $Q$ reveals sufficient information to verify that there exists a packing with the optimal profit $p^*$ while the second condition ensures that querying $Q$ reveals sufficient information to verify that no packing can possibly have a larger profit than $p^*$, no matter what the profits of items $i \in \items\setminus Q$ actually are.

Since any packing $P^*$ with $p(P^*) = p^*$ can only satisfy $U_{P^*}(Q) \le p^*$ if $P^* \subseteq Q \cup \items_T$, the second condition of the definition actually implies the first one. In particular, this means that a query set is feasible if and only if it satisfies the constraints of the following ILP. Note that a similar covering point of view was first observed in~\cite{MegowS23} for the cheapest set problem.

\begin{equation}\tag{K-ILP}\label{ILP}
    \begin{array}{lll}
        \min &\sum_{i \in \items} x_i\\
        \text{s.t. }& \sum_{i \in P} x_i \cdot (U_i - p_i) \ge U_P - p^* &\forall P \subseteq \items \colon \sum_{i \in P} w_i \le B \\
        & x_i \in \{0,1\}& \forall i \in \items
    \end{array}
\end{equation}

\paragraph{The offline problem.} In the offline problem \offKU, we are given an instance $\mathcal{K} = (\items, B, w, p, \mathcal{A})$ with full knowledge of all parameters and our goal is to compute a feasible queryset of minimum cardinality, which is equivalent to solving~\eqref{ILP} with full knowledge of all coefficients. We use $Q^*$ to refer to an optimal solution of \offKU.

\paragraph{The online problem.} In the online problem \KU, we are also given an instance $\mathcal{K} = (\items, B, w, p, \mathcal{A})$ but the profits $p_i$ are initially unknown. The goal is to iteratively and adaptively query items $i$ to reveal their profit $p_i$ until the set of queried items is a feasibile query set. This problem can be seen as solving~\eqref{ILP} with uncertain coefficients $U_i-p_i$ and right-hand side values $U_P-p^*$: Querying an item $i$ corresponds to irrevocably setting $x_i = 1$, incurs a cost that cannot be reverted, and reveals the coefficient $(U_i-p_i)$.

\paragraph{Relaxations.} As \offKU turns out to admit strong inapproximability results, we introduce the following relaxed notion of feasible query sets. We use $(\alpha,\beta)$-\offKU to refer to the offline problem of computing a $(\alpha,\beta)$-feasible query set of minimum cardinality.

\begin{definition}[($\alpha,\beta$)-feasibility]
    \label{def:alpha:beta:feasibility}
    Let $\alpha, \beta \ge 1$. We say that a query set $Q \subseteq \items$ is $(\alpha,\beta)$-feasible if the following two conditions hold:
    \begin{enumerate}
        \item There is a packing $P$ such that $P \subseteq Q \cup \items_T$ and $p(P) \ge \frac{1}{\alpha} \cdot p^*$.
        \item $U_P(Q) \le \beta \cdot p^*$ for every packing $P$.
    \end{enumerate}
\end{definition}

The first condition of the definition ensures that we can find an $\alpha$-approximation for the underlying knapsack instance by using only queried and trivial items. The second condition ensures that after querying $Q$ no feasible packing can have profit greater than $\beta \cdot p^*$, no matter what the profits of the items in $\items \setminus Q$ are. Thus, querying $Q$ reveals sufficient information to verify that the set $P$ of the first condition is a $\frac{1}{\alpha\beta}$-approximation for the underlying knapsack instance.
We use $Q^*_{\alpha,\beta}$ to refer to a minimum-cardinality $(\alpha,\beta)$-feasible query set. Note that $Q^* = Q^*_{1,1}$, $Q^*$ is $(\alpha,\beta)$-feasible for every $\alpha,\beta \ge 1$ and $|Q^*| = \max_{\alpha \ge 1, \beta \ge 1} |Q^*_{\alpha, \beta}|$.

In contrast to~\Cref{def:feasible}, the first condition of~\Cref{def:alpha:beta:feasibility} does not imply the second one. As a consequence, each $(\alpha,\beta)$-feasible query set is a feasible solution for a variant of~\eqref{ILP} in which we replace $p^*$ with $\beta \cdot p^*$, but the inverse is not necessarily true.

\subsection{Our Results and Outline}

In~\Cref{sec:hardness:of:approximation}, we give several hardness results for \offKU. First, we show that deciding whether $Q = \emptyset$ is an $(\alpha,\beta)$-feasible query set is weakly NP-hard for any $\alpha,\beta \ge 1$, which immediately implies that it is weakly NP-hard to approximate  $(\alpha,\beta)$-\offKU within any bounded multiplicative factor. This hardness result mainly exploits that the optimal solution $p^*$ to the weakly NP-hard~\cite{Karp72} underlying knapsack problem appears on the right-hand sides of~\eqref{ILP}. Then, we move on to show that \offKU is $\Sigma_2^p$-complete, which intuitively means that the problem remains hard even if we are given an oracle for deciding problems in NP. Since such an oracle allows us to compute $p^*$, the reason for the $\Sigma_2^p$-hardness is not the appearance of $p^*$ in the right-hand sides but the exponential number of constraints in~\eqref{ILP}. In fact, we prove this hardness result via reduction from \emph{succinct set cover}\cite{Umans01,Umans99}, which is a set cover variant where the element are only given implicitly, i.e., the number of set cover constraints is exponential in the encoding size of the problem.  Exploiting a result by~\cite{Ta2021}, our reduction also shows that there is no $n_0^{1-\epsilon}$-approximation for any $\epsilon > 0$ unless $\Sigma_2^p = \Delta_2^p$, where $n_0 := |\items \setminus \items_T$|.

In~\Cref{sec:offline:algorithms}, we design algorithms for $(\alpha,\beta)$-\offKU for different values of $\alpha$ and $\beta$. Since the hardness results prevent any non-trivial results when comparing against $|Q^*_{\alpha,\beta}|$, we analyze our algorithm by comparing their solutions to $|Q^*|$ instead. This can be seen as a form of resource augmentation as the requirements on the algorithms's solution are less strict than the requirements on the optimal solution we compare against. To achieve our algorithmic results, we treat the two conditions for $(\alpha,\beta)$-feasible query sets (cf.~\Cref{def:alpha:beta:feasibility}) as separate subproblems: (i) Compute a query set $Q_1$ such that there exists a packing $P \subseteq Q_1 \cup \mathcal{I}_T$ with $p(P) \ge \frac{1}{\alpha} p^*$, and (ii) Compute a query set $Q_2$ such that $U_P(Q_2) \le \beta p^*$ holds for all packings $P$.
First, we show how to solve subproblem (i) in polynomial-time for $\alpha = \frac{1}{1-\epsilon}$ with a set $Q_1$ such that $|Q_1| \le |Q^*|$. Our algorithm for this subproblem exploits existing results for the two-dimensional knapsack problem. 
For (ii), we first show how to solve the problem in pseudopolynomial time for $\beta = 2 + \epsilon$ with the guarantee $|Q_2| \le |Q^*|$. The algorithm is based on solving an \offKU variant that only considers packings that correspond to certain greedy solutions. We justify the pseudopolynomial running-time by showing weak NP-hardness for this \offKU variant. By considering a relaxed version of that problem, we manage to solve problem (ii) for $\beta = 4 + \epsilon$ with $|Q_2| \le |Q^*|$ in polynomial time.
Combining the results for both subproblems yields a pseudopolynomial algorithm that computes a $(\frac{1}{1-\epsilon},2+2\epsilon)$-feasible query set $Q$ and a polynomial time algorithm that computes a $(\frac{1}{1-\epsilon},4+4\epsilon)$-feasible query set $Q$. In both cases, $|Q| \le 2 \cdot |Q^*|$.

\subsection{Further Related Work} 
Meißner~\cite{meissner18querythesis} gives an adversarial lower bound of $n$ for \KU that holds even if the instance only has two different weights, preventing any non-trivial adversarial results. However, Megow and Schlöter~\cite{MegowS23} show that this lower bound does not hold in a stochastic setting where the profits $p_i$ are drawn from their intervals $I_i$ according to an unknown distribution that satisfies $\Pr[p_i \le \frac{U_i+L_i}{2}] \le \tau$ for a threshold parameter $\tau$. However, their result only breaks that particular lower bound instance and does not imply general stochastic results for \KU.

Goerigk et al.~\cite{GoerigkGISS15} consider a knapsack problem under uncertainty in a different query setting. In their problem, the profits are known and the weights are uncertain. Furthermore, there is a budget on the queries that an algorithm is allowed to make. These differences lead to a problem fundamentally different from \KU.

Maehara and Yamaguchi~\cite{maehara2020} consider packing ILPs with uncertainty in the cost coefficients. The cost coefficients can be queried. The key difference to the setting of explorable uncertainty is that they are interested in bounding the absolute number of queries instead of comparing against the optimal feasible query set. We remark that this is an important distinction between explorable uncertainty and many other query models. For example, the same distinction applies to a line of research that studies queries that reveal the \emph{existence} of entities instead of numeric values, e.g., the existence of edges in a graph,~c.f.~\cite{Blum2020,GoemansV06,Vondrak07,Behnezhad2022,BehnezhadDH20,DughmiKP23}. For example, Behnezhad et al.~\cite{Behnezhad2022} considered vertex cover in a stochastic setting and showed that it can be approximated within a factor of $(2+\epsilon)$ with only a constant number of queried edges per vertex.

\section{Hardness of Approximation}
\label{sec:hardness:of:approximation}

We start by showing our hardness results for $(\alpha,\beta)$-\offKU.
Not surprisingly, the appearance of $p^*$ in the right-hand sides of~\eqref{ILP} suffices to render $(\alpha,\beta)$-\offKU weakly NP-hard. The following proposition shows that even deciding whether a given set $Q$ is $(\alpha,\beta)$-feasible is already weakly NP-hard. 


\begin{restatable}{proposition}{propHardness}
    \label{prop:hardness}
    Deciding if $Q = \emptyset$ is $(\alpha,\beta)$-feasible is weakly NP-hard for any $\alpha,\beta \ge 1$. 
\end{restatable}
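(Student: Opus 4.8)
The plan is to reduce from the classical (weakly) NP-hard subset-sum / partition problem and exploit that the right-hand sides of~\eqref{ILP} contain the optimal knapsack value $p^*$. Concretely, given a subset-sum instance consisting of positive integers $a_1,\dots,a_n$ and a target $T$, I would build a \KU instance in which every item is non-trivial and the ``slack'' $U_i - p_i$ of each item is so small that the constraint associated with a packing $P$ in~\eqref{ILP} can be satisfied by the empty query set $Q=\emptyset$ \emph{if and only if} $U_P \le p^*$, i.e.\ iff $P$ is already an optimal packing. Thus deciding whether $Q=\emptyset$ is feasible amounts to deciding whether $U_P \le p^*$ for all packings $P$, which in turn forces us to know $p^*$ exactly; since computing $p^*$ encodes subset-sum, this is NP-hard.

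In more detail, I would set the weights and capacity so that the feasible packings correspond exactly (or in a controlled way) to subsets of $\{1,\dots,n\}$, and choose the intervals so that the upper limits $U_i$ behave like the $a_i$ while the true profits $p_i$ are only infinitesimally smaller (or equal) — for instance $U_i = a_i$ and $p_i = a_i - \delta_i$ for tiny $\delta_i>0$, or even simpler, arrange that $U_P - p^* \le 0$ precisely for the optimal packings and $U_P - p^* > 0$ otherwise, while each individual $U_i - p_i$ is strictly smaller than any positive value the right-hand side can take. Then condition~2 of~\Cref{def:alpha:beta:feasibility} with $Q=\emptyset$ reads $U_P \le \beta p^*$ for all packings $P$; by scaling I can make the gap between $\beta p^*$ and the next-larger achievable value of $U_P$ tiny relative to the magnitude of the numbers, so that this holds iff there is no packing strictly beating the intended optimum, i.e.\ iff the subset-sum target is \emph{not} exceedable in the relevant sense. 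I also need to check condition~1 with $Q=\emptyset$, which requires a packing $P \subseteq \items_T$ with $p(P)\ge \frac1\alpha p^*$; if I make all items non-trivial this is vacuously problematic, so instead I would either add one trivial ``dummy'' item carrying the requisite profit, or make the intended optimal packing consist of trivial items while the ``threatening'' large-$U$ packings consist of non-trivial items — the construction should separate the role of the two conditions cleanly.

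The key technical point is handling the factors $\alpha,\beta \ge 1$ uniformly: the reduction must work for \emph{every} fixed $\alpha,\beta$, so the numerical gaps I engineer have to dominate the multiplicative slack $\beta p^*$ versus $p^*$ and $\frac1\alpha p^*$ versus $p^*$. The standard trick is to scale all numbers by a large polynomial factor (or equivalently work with integers and rely on $\alpha,\beta$ being fixed constants) so that ``$U_P \le \beta p^*$ for all $P$'' becomes equivalent to ``$U_P \le p^*$ for all $P$'' on the restricted set of packings the gadget produces; one has to argue that no packing has $U_P$ landing in the forbidden interval $(p^*, \beta p^*]$, which is where the choice of weights/capacity encoding subset-sum does the work. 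I would then conclude: $Q=\emptyset$ is $(\alpha,\beta)$-feasible iff the subset-sum instance has the ``no overshoot'' property, establishing weak NP-hardness, and since a bounded-factor approximation algorithm for $(\alpha,\beta)$-\offKU would in particular detect whether the optimum is $0$ (i.e.\ whether $Q=\emptyset$ suffices), the same reduction rules out any bounded multiplicative approximation.

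The main obstacle I anticipate is the bookkeeping needed to make the two conditions of $(\alpha,\beta)$-feasibility decouple correctly: I must ensure that condition~1 (existence of a good packing among $Q\cup\items_T$) is automatically satisfied by the gadget regardless of the subset-sum answer, so that the hardness is driven purely by condition~2, and simultaneously ensure that condition~2 with the multiplicative slack $\beta$ is still equivalent to an exact subset-sum decision. Getting the interplay of weights, capacity, the $U_i$'s, the $p_i$'s, and a possible dummy trivial item right — so that precisely the subset-sum ``yes'' instances give $Q=\emptyset$ feasible — is the delicate part; the rest is a routine scaling argument.
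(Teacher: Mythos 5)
You have the right high-level picture: the hardness is driven by condition~2 of \Cref{def:alpha:beta:feasibility} together with the appearance of $p^*$ on the right-hand sides, and your second option for condition~1 (keep the intended optimum among trivial items and let only the ``threatening'' packings be non-trivial) is exactly the right move; reducing from subset sum rather than the knapsack decision problem is immaterial. However, the concrete mechanisms you propose for condition~2 do not survive $\beta>1$, and this is a genuine gap rather than bookkeeping. If the non-trivial items have only infinitesimal slack (e.g.\ $U_i=a_i$, $p_i=a_i-\delta_i$ with tiny $\delta_i$), then every packing has $U_P\le p^*+n\max_i\delta_i\le \beta p^*$, so for $\beta>1$ the empty set is \emph{always} $(\alpha,\beta)$-feasible and the reduction decides nothing (and for $\beta=1$ it is essentially never feasible, for the opposite reason). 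Your fallback --- scale all numbers by a large polynomial factor so that ``$U_P\le\beta p^*$ for all $P$'' becomes equivalent to ``$U_P\le p^*$ for all $P$'' --- cannot work either: uniform scaling preserves ratios, so the forbidden interval $(p^*,\beta p^*]$ scales along with everything else, and no additive gap you engineer can dominate a multiplicative one.

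The missing idea is to use the fixed constants $\alpha,\beta$ and the decision threshold $D$ \emph{inside the construction}. The paper leaves the given knapsack instance entirely trivial and adds a single non-trivial item of weight $B$ with uncertainty interval $(0,\beta\cdot D)$ and negligible true profit $\epsilon$. Then the only packing whose constraint is in doubt is the singleton containing this item, and condition~2 for $Q=\emptyset$ reads $\beta D\le\beta p^*$, i.e.\ it holds iff $p^*\ge D$; condition~1 holds for every $\alpha\ge1$ automatically because the optimum is still attained by trivial items. With that one gadget the ``delicate interplay'' you flag disappears entirely; as written, your plan does not yet contain the device that renders the multiplicative $\beta$-slack harmless, so it would not go through without this additional idea.
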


\begin{proof}
    We reduce from the decision variant of knapsack. Consider a given knapsack instance and a parameter $D$ with the goal to decide wether $p^* \ge D$ holds for the optimal profit $p^*$.
    
    We construct an instance of the offline problem by using the given knapsack instance with trivial uncertainty intervals, i.e., $I_i = \{p_i\}$ for all items $i$. We add one additional item $n+1$ with uncertainty interval $I_{n+1} = (0, \beta \cdot D)$, value $p_{n+1} = \epsilon$ for a sufficiently small $\epsilon > 0$, and weight $w_{n+1}=B$.
    Furthermore, we construct the query set $Q = \emptyset$.

    If $p^* \ge D$, then $U_{n+1}(Q) = \beta \cdot D \le \beta \cdot p^*$, which implies that $Q$ satisfies the second condition of~\Cref{def:alpha:beta:feasibility}. Since $p_{n+1} = \epsilon$, there also is a packing $P \subseteq Q \cup \items_T = \items \setminus \{n+1\}$ with $p(P) \ge \frac{p^*}{\alpha}$. Thus, $Q = \emptyset$ is $(\alpha,\beta)$-feasible. 

    If $p^* < D$, then $U_{n+1}(Q) = \beta \cdot D > \beta \cdot p^*$. Thus, $Q$ is not $(\alpha,\beta)$-feasible as it does not satisfy the second condition of~\Cref{def:alpha:beta:feasibility}.

    In conclusion, $Q$ is $(\alpha,\beta)$-feasible if and only if $p^* \ge D$.
\end{proof}

This means that distinguishing between instances that can be solved without any query and instances that need at least one query is weakly NP-hard, which implies the following:

\begin{restatable}{corollary}{corHardness}
    \label{cor:hardness}
    It is weakly NP-hard to approximate $(\alpha,\beta)$-\offKU within any bounded multiplicative factor.
\end{restatable}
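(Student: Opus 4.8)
The plan is to derive \Cref{cor:hardness} from \Cref{prop:hardness} by a standard gap argument. The crucial point is that the instance built in the proof of \Cref{prop:hardness} has an all-or-nothing behavior for $(\alpha,\beta)$-\offKU: when $p^* \ge D$ the empty query set is $(\alpha,\beta)$-feasible, so $|Q^*_{\alpha,\beta}| = 0$, and when $p^* < D$ the empty set is infeasible while querying all items is always $(\alpha,\beta)$-feasible, so $|Q^*_{\alpha,\beta}|$ is a positive integer, i.e.\ $|Q^*_{\alpha,\beta}| \ge 1$. Hence the optimal objective of $(\alpha,\beta)$-\offKU is exactly $0$ on the ``$p^* \ge D$'' instances and at least $1$ otherwise.

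First I would assume, for contradiction, that for some bounded factor $c \ge 1$ there is a polynomial-time algorithm $\mathcal{A}$ that, on every instance, returns an $(\alpha,\beta)$-feasible query set of cardinality at most $c \cdot |Q^*_{\alpha,\beta}|$. Given a knapsack instance together with a threshold $D$, I would construct in polynomial time the \offKU instance $\mathcal{K}$ of \Cref{prop:hardness} and run $\mathcal{A}$ on it. If $p^* \ge D$, then $|Q^*_{\alpha,\beta}| = 0$, so $\mathcal{A}$ must output a set of cardinality at most $c \cdot 0 = 0$, i.e.\ it certifies that $Q = \emptyset$ is $(\alpha,\beta)$-feasible. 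If $p^* < D$, then $|Q^*_{\alpha,\beta}| \ge 1$, so $\mathcal{A}$ returns a set of cardinality at least $1$. Reading off whether $|\mathcal{A}(\mathcal{K})| = 0$ or $|\mathcal{A}(\mathcal{K})| \ge 1$ therefore decides in polynomial time whether $p^* \ge D$, contradicting the weak NP-hardness of that decision problem established in \Cref{prop:hardness}.

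There is essentially no serious obstacle here; the one point that deserves a sentence of care is that a multiplicative approximation guarantee degenerates on instances whose optimum equals $0$, since it forces the returned solution to have cost $0$, and this is exactly what lets us separate the two cases. It is also worth noting that the bound $c$ need not be constant for the argument to work, as $c \cdot 0 = 0$ regardless of $c$, which is why the statement rules out approximation within \emph{any} bounded multiplicative factor.
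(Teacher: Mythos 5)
Your proposal is correct and follows essentially the same route as the paper: both exploit that on the instances from \Cref{prop:hardness} the optimum is $0$ when $p^* \ge D$ and at least $1$ otherwise, so any bounded-factor approximation's output being empty or not decides the knapsack question. The only cosmetic difference is that you phrase the gap via $|Q^*_{\alpha,\beta}| \in \{0\} \cup \mathbb{Z}_{\ge 1}$ while the paper argues directly that a querying algorithm would have unbounded ratio against $Q^* = \emptyset$; the substance is identical.
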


\begin{proof}
    Assume there exists a multiplicative $\gamma(n)$-approximation for the $(\alpha,\beta)$-offline problem and some bounded function $\gamma$.
    
    Given any knapsack instance, we can use the reduction of~\Cref{prop:hardness} to construct an instance of offline knapsack under explorable uncertainty.
    
    If the $\gamma(n)$-approximation executes queries to solve this instance, then $Q = \emptyset$ must be infeasible as $\frac{|A|}{|Q^*|} = \frac{|A|}{0}$ would be unbounded for the set of items $A$ queried by the $\gamma(n)$-approximation and the optimal solution $Q^* =\emptyset$. If the $\gamma(n)$-approximation does not execute any queries, then the set $Q = \emptyset$ must be feasible as the approximation must compute a feasible query set. 
    
    Thus, the $\gamma(n)$-approximation executes queries if and only if $Q = \emptyset$ is infeasible. Per proof of~\Cref{prop:hardness}, this means that the approximation algorithm executes queries if and only if $p^* < D$. This implies that an $\gamma(n)$-approximation would solve the given decision problem knapsack instance in polynomial time. 
\end{proof}

\Cref{prop:hardness} is also an indicator that $(\alpha,\beta)$-\offKU might not be in NP, as verifying whether a query set is $(\alpha,\beta)$-feasible is already NP-hard. For $\alpha, \beta = 1$, we make this observation more formal by proving that \offKU is complete for the second level of the polynomial hierarchy, i.e., $\Sigma_2^p$-complete. Intuitively, the class $\Sigma_2^p$ contains problems that, given an oracle for deciding problems in NP, can be solved in non-deterministic polynomial time. 
Similarly, the class $\Delta_2^p$ contains problems that, given the same type of oracle, can be solved in deterministic polynomial time. Hardness and completeness for the class $\Sigma_2^p$ are defined in the same way as for the class NP. For a more formal introduction, we refer to~\cite{AroraBarak2009}.
Under the conjecture that $\sum_2^p \not= \text{NP}$, the $\sum_2^p$-completeness implies that \offKU is not in NP, and under the conjecture  $\sum_2^p \not= \Delta_2^p$ it cannot be solved optimally in polynomial time even when given an oracle for deciding problems in NP.

\begin{theorem}
    \label{thm:level2:hardness}
   \offKU is $\Sigma_2^p$-complete.
\end{theorem}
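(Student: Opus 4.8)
The plan is to establish membership in $\Sigma_2^p$ and $\Sigma_2^p$-hardness separately. For membership, recall that a problem lies in $\Sigma_2^p$ if there is a polynomial-time predicate $\phi$ and polynomial bounds such that the answer is "yes" iff $\exists y\, \forall z\, \phi(\text{input}, y, z)$. Phrased as a decision problem, \offKU asks whether there is a feasible query set of cardinality at most $k$. The existential witness $y$ is the query set $Q$ itself (of size $\le k$) together with a "certificate of optimality" for the knapsack value $p^*$ — but $p^*$ is not something we can guess directly since we must also verify it. The cleaner route is: guess $Q$ with $|Q| \le k$ and guess a packing $P_0 \subseteq Q \cup \items_T$; then the universal quantifier ranges over all packings $P$, and we check that $p(P_0) \ge p(P)$ (so $p(P_0) = p^*$) and that $U_P(Q) \le p(P_0)$. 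Both checks are polynomial-time given $P$, $P_0$, $Q$; by Definition~\ref{def:feasible} (and the observation that condition~2 implies condition~1) this $\Pi_1$ verification is exactly feasibility of $Q$ once $p(P_0)=p^*$ is forced by the universal clause. Hence \offKU $\in \Sigma_2^p$.

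For hardness, I would reduce from \emph{succinct set cover}~\cite{Umans01,Umans99}, as the authors announce. In that problem one is given a succinct (e.g. circuit or DNF) description of a ground set of exponentially many elements together with polynomially many sets, and asks for a minimum sub-collection covering all elements; it is $\Sigma_2^p$-complete. The key structural match is the ILP~\eqref{ILP}: \offKU is exactly a covering problem with one $0/1$ variable per item and one constraint $\sum_{i \in P} x_i (U_i - p_i) \ge U_P - p^*$ per packing $P$ — an exponential family of constraints indexed by a combinatorially defined object. So the reduction must encode: (a) the polynomially many sets of the set-cover instance as non-trivial items whose "reduction in upper limit when queried", $U_i - p_i$, plays the role of the coverage contribution; (b) the exponential family of elements-to-be-covered as the family of packings $P$, so that each element corresponds to (at least) one packing whose constraint $U_P - p^* > 0$ is satisfied precisely when some item covering that element is queried. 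Weights and the capacity $B$ would be chosen so that the set of packings realizable under $\sum_{i\in P} w_i \le B$ is in bijection (or close to it) with the succinctly-described element set, and the profits $p_i$ and interval limits $U_i$ would be chosen so that $U_P - p^*$ is a fixed positive constant for "real" element-packings and $\le 0$ for all other packings (making their constraints vacuous). One then argues that a query set of size $k$ is feasible iff the corresponding sub-collection of size $k$ is a set cover.

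The main obstacle I anticipate is controlling the universal side of the reduction: we need $p^*$ — the optimum of the embedded knapsack instance — to take a precisely known value so that the right-hand sides $U_P - p^*$ have the intended signs, yet the very hardness of computing $p^*$ is part of what makes \offKU hard, so we must \emph{design} the instance so that $p^*$ is forced by construction (e.g. via a single dominating "canonical" packing with a transparent optimal value), not discovered. Relatedly, one must ensure that no \emph{spurious} packing — a subset of items with total weight $\le B$ that does not correspond to any set-cover element — produces a constraint with positive right-hand side that a set cover fails to satisfy; this typically requires a gadget (large weights, or a padding structure) guaranteeing that every feasible packing either is a designated element-packing or has $U_P \le p^*$ automatically. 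Finally, for the inapproximability corollary, I would push the same construction through the gap-amplified version of succinct set cover from~\cite{Ta2021}: since their hardness gives no $N^{1-\epsilon}$-approximation (with $N$ the number of sets) unless $\Sigma_2^p = \Delta_2^p$, and our reduction maps sets to non-trivial items one-to-one with $n_0 = |\items \setminus \items_T|$ equal to the number of sets and preserves cover size as query-set size, the same gap transfers verbatim to give no $n_0^{1-\epsilon}$-approximation for \offKU.
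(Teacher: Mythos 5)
Your membership argument is sound and in fact a bit more self-contained than the paper's: the paper certifies a query set $Q$ and then uses an NP-oracle (binary search for $p^*$ and for the maximum of $\sum_{i\in P}U_i(Q)$ over packings) to verify feasibility, whereas you fold the optimality of the guessed packing $P_0$ into the universal quantifier via the clause $p(P_0)\ge p(P)$; both are valid $\Sigma_2^p$ characterizations and your equivalence with Definition~\ref{def:feasible} checks out. You have also correctly identified the paper's hardness strategy: reduce from succinct set cover, let the non-trivial items be one $y_j$ per formula $\phi_j$, let the exponentially many constraints of~\eqref{ILP} play the role of the implicitly given elements, force $p^*$ by a single canonical full-capacity item, and arrange weights so that only packings hitting the capacity exactly have $U_P>p^*$. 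The approximation-preserving remark at the end matches the paper's Theorem~\ref{thm:hardness:level2:approximation} as well.

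However, there is a genuine gap: for the hardness direction you have written a plan, not a proof. The entire technical content of the theorem is the construction you defer — namely a weight scheme under which (i) $w(P)=B$ and $U_P>p^*$ hold exactly for packings whose intersection with the literal items encodes a variable assignment, and (ii) $y_j\in P$ holds for such a packing if and only if the encoded assignment satisfies the 3-DNF formula $\phi_j$. Point (ii) is the crux you label "the main obstacle" and then leave unresolved: one must encode, through subset-sum-style equalities alone, the predicate "at least one clause of $\phi_j$ is satisfied," which is a disjunction and does not follow from a single digit constraint. The paper handles this with per-clause gadget items $a_{j,k,0},\dots,a_{j,k,3}$ and digit bookkeeping forcing $a_{j,k,0}\in P$ exactly when the three literal items of clause $C_{j,k}$ are all in $P$, plus a second digit block with auxiliary items $u_j,f_{j,0},\dots,f_{j,k_j-1}$ forcing $y_j\in P$ iff some $a_{j,k,0}\in P$; all partial weights are then concatenated (with padding against carries) into single weights and a single capacity, and each of these properties requires its own verification, as does the claim that every assignment is realized by at least one tight packing (needed for the direction "feasible query set $\Rightarrow$ set cover"). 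Without constructing and proving such gadgets, the reduction — and hence the $\Sigma_2^p$-hardness claim — is not established.
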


\begin{proof}
    We show $\Sigma_2^p$-membership
   in~\Cref{app:hardness:of:approximation} and focus here on proving $\Sigma_2^p$-hardness.
   Our proof is via reduction from \emph{succinct set cover}, which is known to be $\sum_2^p$-complete~\cite{Umans01,Umans99}. In the same way as for NP-hardness proofs, we need to give a polynomial time reduction to the decision problem variant of \offKU such that the constructed instance is a YES-instance if and only if the given succinct set cover instance is a YES-instance.

    \paragraph{Succinct set cover.} We are given $n$ decision variables $x_1,\ldots, x_n$, $m$ propositional logic formulas $\phi_1,\ldots,\phi_m$ over these variables and an integer parameter $k$. Each formula $\phi_j$ is in $3$-DNF form\footnote{\emph{Disjunctive normal form (DNF)} refers to a disjunction of conjunctions, i.e., $\phi_j = C_{j,1} \lor \ldots \lor C_{j,k_j}$, where each clause $C_{j,k}$ is a conjunction of literals. In 3-DNF, each $C_{j,k}$ contains exactly three literals. A formula in DNF is satisfied by a variable assignment if at least one clause is satisfied by the assignment.} and we use $S_j$ to denote the set of $0$-$1$-vectors (variable assignments) that satisfy $\phi_j$. The formulas $\phi_j$ have to satisfy $\bigcup_{j \in \{1,\ldots,m\}} S_j = \{0,1\}^n$, i.e., each variable assignment satisfies at least one formula $\phi_j$. The goal is to find a subset $S \subseteq \{1,\ldots,m\}$ such that $\bigcup_{j \in S} S_j = \{0,1\}^n$ and $|S| \le  k$. 
    We assume that each variable occurs as a literal in at least one formula. If not, we can just remove the variable and obtain a smaller instance.
    Succinct set cover can be interpreted as a set cover variant where the elements and sets are only given \emph{implicitly} and not as an explicit part of the input.


    \paragraph{Main reduction idea.} Before we give the technical details of the reduction, we first sketch the basic idea. In particular, we describe the properties that we want the constructed instance to have. In the technical part, we then describe how to actually achieve these properties. 
    At its core, the reduction will use the knapsack weights to encode structural information of the input instance into the constructed instance. The idea to use numerical weights to encode constraints is quite natural in NP-hardness proofs for weakly NP-hard problems, see e.g.~the NP-hardness proof for the subset sum problem given in~\cite{AlgorithmDesign}. The usage of the knapsack weights in our reduction is on some level inspired by such classical reductions, but requires several new ideas to handle the implicit representation of the input problem.
    
    First, we introduce a single trivial item $i^*$ with $w_{i^*} = p_{i^*} = B$. This item alone fills up the complete knapsack capacity $B$, which we define later, and the instance will be constructed in such a way that $p^* = p_{i^*} = B$ is the maximum profit of any packing. Thus, only packings $P$ with $U_P > p^*$ induce non-trivial constraints in~\eqref{ILP}. We design the instance such that $U_P \ge p^*$ only holds for packings that use the full capacity.

   \begin{property}
       \label{prop:major:1}
       A packing $P$ of the constructed instance satisfies $U_P \ge p^*$ only if $w(P) = B$.
   \end{property}

    Next, we want each packing $P$ with $w(P)=B$ and $U_P > p^*$ to represent a distinct variable assignment in $\{0,1\}^n$. To this end, we introduce a set $X$ of $2n$ items, two items $v_i$ and $\bar{v}_i$ for each variable $x_i$ with $i \in \{1,\ldots,n\}$. Intuitively, $v_i$ represents the positive literal $x_i$ and $\bar{v}_i$ represents the negative literal $\neg x_i$. We say that a subset $X' \subseteq X$ \emph{represents} a variable assignment if $|X' \cap \{v_i,\bar{v}_i\}|=1$ for all $i \in \{1,\ldots,n\}$. We design our instance such that the packings $P$ with $w(P)= B$ and $U_P > p^*$ exactly correspond to the variable assignments in $\{0,1\}^n$. Note that this excludes the packing $P=\{i^*\}$ as this packing has $w(P) = B$.

    \begin{restatable}{property}{propTwo}
       \label{prop:major:2}
       If $w(P) = B$ and $U_P > p^*$, then $P \cap X$ represents a variable assignment. Each variable assignment is represented by at least one $P$ with $w(P) = B$ and $U_P > p^*$.
   \end{restatable}

   If the first two properties hold, then all non-trivial constraints in the ILP~\eqref{ILP} for the constructed instance correspond to a packing $P$ with $w(P)=B$ and $U_P> p^*$ and, thus, to a variable assignment of the given succinct set cover instance. Furthermore, each variable assignment is represented by at least one active constraint. With the next property, we want to ensure that each possible query, i.e., each non-trivial item, corresponds to a succinct set cover formula $\phi_j$. To this end, we introduce the set of items $Y = \{y_1,\ldots,y_m\}$. These items will be the only non-trivial items in the constructed instance, so each possible query is to an element of $Y$. 
   Next, we want to achieve that querying an item $y_j$ suffices to satisfy all constraints of~\eqref{ILP} for packings $P$ with $w(P)=B$ and $U_P> p^*$ that represent a variable assignment which satisfies formula $\phi_j$, and does not impact any other constraints.
   Formally, we would like to design our instance such that the following property holds.
   \begin{property}
       \label{prop:major:3}
       For each packing $P$ with $U_P > p^*$ and each $y_j \in Y$: $y_j \in P$ if and only if $X \cap P$ represents a variable assignment that satisfies $\phi_j$. If $y_j \in P$, then $U_P-p^* \le U_{y_j} - p_{y_j}$.
   \end{property}
    If we manage to define our reduction in such a way that the three properties are satisfied, it is not hard to show correctness (see~\Cref{app:hardness:of:approximation} for the second direction):

    \textbf{First Direction.} If there is an index set $S$ with $|S|\le k$ that is a feasible solution to the succinct set cover instance, then each possible variable assignment must satisfy at least one formula $\phi_j$ with $j \in S$. We claim that $Q = \{y_j \mid j \in S\}$ is a feasible query set for the constructed \offKU instance. To this end, consider an arbitrary packing $P$ with $U_P > p^*$, which are the only packings that induce non-trivial constraints in the corresponding~\eqref{ILP}.
   By~\Cref{prop:major:1},  we have $w(P) = B$. \Cref{prop:major:2} implies that $X \cap P$ represents some variable assignment $\varphi$ and \Cref{prop:major:3} implies that $y_j \in P$ for all $\phi_j$ that are satisfied by $\varphi$. By assumption that $S$ is a feasible succinct set cover solution, we get $Q \cap P \not= \emptyset$. \Cref{prop:major:3} implies  $U_P(Q) \le U_P(\{y_j\}) \le p^*$ for $y_j \in Q \cap P$. Thus, $Q$ satisfies the constraint of $P$ in the~\eqref{ILP} for the constructed instance.  


    \paragraph{Technical reduction} It remains to show how to actually construct an \offKU instance that satisfies the three properties.
    Given an instance of succinct set cover, we construct an instance of \offKU consisting of four sets $X, \Phi$, $A$ and $L$ of items such that $\items = X \cup \Phi \cup A \cup L$. The set $X$ is defined exactly as sketched above, the set $\Phi$ contains the set $Y =\{y_j,\ldots,y_m\}$ as introduced above, and $L:= \{i^*\}$ for the item $i^*$ with $w_{i^*} = p_{i^*} = B$. All further items will be used to ensure the three properties. 
        
    Conceptionally, we construct \emph{several} partial weights for each item $i$ that will later be combined into a single weight. For each item $i$, we construct two weights $w_{i,\phi_j}$ and $w_{i,\rho_j}$ for each formula $\phi_j$, and a single weight $w_{i,x}$. Similarly, we break down the knapsack capacity into multiple partial knapsack capacities $B_x$, and $B_{\phi_j},B_{\rho_j}$ for each $\phi_j$. Intuitively, the full weight $w_i$ of an item $i$ will be the concatenation of the decimal representations of the partial weights, i.e., $w_i = w_{i,x}w_{i,\rho_m}\cdots w_{i,\rho_{1}}w_{i,\phi_m}\cdots w_{i,\phi_{1}}$, and $B$ will be the concatenation of the partial capacities. We make this more precise in~\Cref{app:hardness:of:approximation} after defining all partial weights in such a way that the following property holds.
   
    \begin{restatable}{property}{propFour}
       \label{prop:4} 
       For each packing $P$, it holds $\sum_{i \in P} w_i = B$ if and only if $\sum_{i \in P} w_{i,x} = B_x$, and $\sum_{i \in P} w_{i,\phi_j} = B_{\phi_j}$ and  $\sum_{i \in P} w_{i,\rho_j} = B_{\rho_j}$ for all $j \in \{1,\ldots,m\}$.
    \end{restatable}

    For now, we operate under the assumption that~\Cref{prop:4} holds and focus on the partial weights and capacities, and proceed by defining remaining parts of the construction.

    \paragraph{Definition of the $w_x$-weights:} As formulated in~\Cref{prop:major:2}, we would like each packing $P$ with $w(P) = B$ and $U_P> p^*$ to represent a variable assignment. To this end, we need such a packing to contain exactly one item of $\{v_i,\bar{v}_i\}$ for each $i \in \{1,\ldots,n\}$. To enforce this, we use the partial $w_x$-weights and the partial capacity $B_x$. In particular, we define $w_{v_i,x} = w_{\bar{v}_i,x} = 10^i$ for each $i \in  \{1,\ldots,n\}$, and $B_x = \sum_{i=1}^n 10^i$. For all items $j \in \mathcal{I} \setminus X$, we define $w_{j,x} = 0$, which immediately implies the following property:    

   \begin{property}
       \label{prop:5} 
    $P$ satisfies $\sum_{i \in P} w_{i,x} = B_x$ iff $P \cap X$ represents a variable assignment. 
   \end{property}

    \paragraph{Definition of the set $A$ and the $w_{\phi_j}$-weights:} Define $A = \bigcup_{j \in \{1,\ldots,m\}} A_{\phi_j}$ for sets $A_{\phi_j}$ to be defined below. For formula $\phi_j$, let $k_j$ denote the number of clauses in $\phi_j$ and let $C_{j,1},\ldots, C_{j,k_j}$ denote these clauses. For each $C_{j,k}$, we add four items $a_{j,k,0},a_{j,k,1},a_{j,k,2},a_{j,k,3}$ to set $A_{\phi_j}$. The idea is to define the partial $w_{\phi_j}$-weights and the partial $B_{\phi_j}$ capacity in such a way that the following  property holds.
    
    \begin{restatable}{property}{propSix}
       \label{prop:6} 
       For each $\phi_j$, a packing $P$ satisfies $\sum_{i \in P} w_{i,\phi_j} = B_{\phi_j}$ and $\sum_{i \in P} w_{i,x} = B_x$ iff $a_{j,k,0} \in P$ for each clause $C_{j,k}$ that is satisfied by the assignment represented by $X \cap P$.   
   \end{restatable}

    To achieve the property, we first define the partial $w_{\phi_j}$-weights for the items $X$.
    For a literal $x_i$, let $\mathcal{C}_{x_i,j} := \{ k \mid x_i \text{ occurs in } C_{j,k}\}$ and define  $\mathcal{C}_{\neg x_i,j}$ in the same way.
    We define the weights $w_{v_i, \phi_j}$ and $w_{\bar{v}_i, \phi_j}$ as $\sum_{k \in \mathcal{C}_{x_i,j}} 10^{k_j + k - 1}$ and $\sum_{k \in \mathcal{C}_{\neg x_i,j}} 10^{k_j + k - 1}$, respectively.
    Intuitively, digit $k_j+k$ of the sum $\sum_{h \in X \cap P} w_{h,\phi_j}$ of a packing $P$ with $\sum_{i \in P} w_{i,x} = B_x$ indicates whether the assignment represented by $X\cap P$ satisfies clause $C_{j,k}$ or not: If the clause is satisfied, then  $X\cap P$ contains the items that represent the three literals of $C_{j,k}$ and the digit has value $3$. Otherwise, $X\cap P$ contains at most two of the items that represent the literals of  $C_{j,k}$ and the digit has value at most $2$. 

    Finally, for each for $i \in \{0,1,2,3\}$, we define the $w_{\phi_j}$-weight of item $a_{j,k,i}$ as $w_{\phi_j,a_{j,k,i}} = i \cdot 10^{k_j + k -1} + 10^{k-1}$. For all remaining items $i \in \mathcal{I} \setminus (X \cup A_{\phi_j})$, we define the partial $w_{\phi_j}$-weight to be zero. Furthermore, we define the partial capacity $B_{\phi_j} = \sum_{k=0}^{k_j-1} 10^k + \sum_{k=k_j}^{2k_j-1} 3 \cdot 10^k$. We claim that these definitions enforce~\Cref{prop:6}. 

    Intuitively, the fact that the $k_j$ decimal digits of lowest magnitude in $B_{\phi_j}$ have value $1$ forces a packing $P$ with $\sum_{i \in P} w_{i,\phi_j} = B_{\phi_j}$ to contain exactly one item of $\{a_{j,k,0},\ldots,a_{j,k,3}\}$ for each $k \in \{1,\ldots, k_j\}$ as each such item increases the corresponding digit $k-1$ by one. Similarly, the value of each digit $k_j+k-1$ in $B_{\phi_j}$ is three.
    Since the elements of $X$ that occur in clause $C_{j,k}$ increase the value of digit $k_j+k-1$ in $w_{\phi_j}(P)$ by one and item $a_{j,k,i}$ increases the digit by $i$, a packing $P$ with value three in digit $k_j+k-1$ of $w_{\phi_j}(P)$ can contain item $a_{j,k,0}$ if and only if $X \cap P$ contains the three items that correspond to the literals in $C_{j,k}$. This implies~\Cref{prop:6}. We give a more formal argumentation in~\Cref{app:hardness:of:approximation}.

    \paragraph*{Definition of set $\Phi$ and the $w_{\rho_j}$-weight:} Let $\Phi = \bigcup_{j \in \{1,\ldots,m\}} \Phi_j$ with $\Phi_j = \{y_j,u_j,f_{j,0},\ldots,f_{j,k_j-1}\}$. Note that $y_j$ is the item that has already been introduced for~\Cref{prop:major:3}. As a step toward enforcing this property, we define the $w_{\rho_j}$-weights such that:
    
    \begin{restatable}{property}{propSeven}
       \label{prop:7}
       For each $\phi_j$, a packing $P$ with $\sum_{i \in P} w_{i,\rho_j} = B_{\rho_j}$ has $y_j \in P$ if and only if $a_{j,k,0} \in P$ for some clause $C_{j,k}$ in $\phi_j$.
   \end{restatable}

    To enforce this property, we define the following partial capacity $B_{\rho_j} = k_j + 10^{k_j^2} + 10^{k_j^2+ 1}$. Next, we define the $w_{\rho_j}$-weight of the elements $a_{j,k,0}$, $k \in \{1,\ldots,k_j\}$, as $w_{a_{j,k,0},\rho_j} = 1$. Furthermore, we define $w_{u_j,\rho_j} = 10^{k_j^2 +1} + 10^{k_j^2} + k_j$, $w_{y_j,\rho_j} = 10^{k_j^2 +1}$ and $w_{f_{j,k},\rho_j} = 10^{k_j^2} + k$, for all $k \in \{0,\ldots,k_j-1\}$. For all other items, define the $w_{\rho_j}$-weight to be zero. We show in~\Cref{app:hardness:of:approximation} that these definitions imply~\Cref{prop:7}. 



    \paragraph*{Definition of the uncertainty intervals and precise profits:} To finish the reduction, we define the profits and uncertainty intervals of all introduced items:
    \begin{itemize}
        \item For the items $y_j$, $j \in \{1,\ldots,m\}$, we define the uncertainty interval $I_{y_j}=(w_{y_j}-2,w_{y_j}+\epsilon)$ for a fixed $0 < \epsilon < \frac{1}{m}$. We define the profits as $p_{y_j} = w_{y_j}-1$.
        \item For all items $i \in \mathcal{I} \setminus \{y_j \mid j \in \{1,\ldots,m\}\}$, we use trivial uncertainty intervals $I_i = \{w_i\}$.
    \end{itemize}
    
    \paragraph*{Proof of the three main properties:} With the full construction in place, we are ready to prove the three main properties from the beginning of the proof:

    \begin{enumerate}
        \item \textbf{\Cref{prop:major:1}:} By definition of the profits, we have $p(P) \le w(P)$ for each packing $P$, which implies that the maximum profit is $p^* \le B$. Since the packing $P=L=\{i^*\}$ has a profit of exactly $B$, we get $p^* = B$. On the other hand, the upper limit $U_P$ of a packing $P$ is $w(P) + \epsilon |P \cap \{y_j \mid j \in \{1,\ldots,m\}|$ as only the items in $\{y_j \mid j \in \{1,\ldots,m\}$ have a non-trivial uncertainty interval with upper limits of $w_{y_j} + \epsilon$. By choice of $\epsilon$, this gives $U_P = w(P) + \epsilon |P \cap \{y_j \mid j \in \{1,\ldots,m\}| < w(P) + 1$. Since all weights are integer, this implies that $U_P \ge p^* = B$ only holds if $w(P) = B$. 
        \item \textbf{\Cref{prop:major:2}:} The property, in particular the first part, is essentially implied by~\Cref{prop:4} and~\Cref{prop:5}. We give the formal argumentation in~\Cref{app:hardness:of:approximation}.  
        \item \textbf{\Cref{prop:major:3}:} By~\Cref{prop:major:1}, a packing $P$ has $U_P \ge p^*$ if and only if $w(P) = B$. By~\Cref{prop:4}, the latter holds if and only if $w_x(P)= B_x$, $w_{\phi_j}(P) = B_{\phi_j}$ and $w_{\rho_j}(P) = B_{\rho_j}$ for all $j \in \{1,\ldots,m\}$.        
        Fix a packing $P$ with $y_j \in P$ for some $j \in \{1,\ldots,m\}$. By \Cref{prop:7}, $y_j \in P$ holds if and only if $a_{j,k,0} \in P$ for some clause $C_{j,k}$ in $\phi_j$.  By \Cref{prop:6}, $a_{j,k,0} \in P$ if and only if $C_{j,k}$ is satisfied by the assignment represented by $X \cap P$. This gives the first part of \Cref{prop:major:3}. For the final part, observe that $U_{y_j} - p_{y_j} > 1$. On the other hand, $U_p-p^* < 1$. Hence, the second part of \Cref{prop:major:3} holds.
    \end{enumerate}

    To finish the proof of the reduction, it remains to argue about the running time and space complexity of the reduction. We do so in~\Cref{app:hardness:of:approximation}. The main argument is that, while the numerical values of the constructed weights are exponential, the number of digits in their decimal representations (and, thus, their encoding size) is polynomial.
\end{proof}

The previous theorem proves $\sum_2^p$-hardness for \offKU. Exploiting the inapproximability result  on the succinct set cover problem given in~\cite[Theorem 7.2]{Ta2021}, we can show the following stronger statement by using the same reduction.

\begin{restatable}{theorem}{hardnessApproximation}
   \label{thm:hardness:level2:approximation}
    Unless $\Sigma_2^p = \Delta_2^p$, there exists no $n_0^{1-\epsilon}$-approximation (given access to an oracle for problems in NP) for \offKU for any $\epsilon > 0$, where $n_0:= |\items \setminus \mathcal{I}_T|$.
\end{restatable}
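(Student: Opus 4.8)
The plan is to reuse the reduction from the proof of \Cref{thm:level2:hardness} essentially verbatim, but feed it a \emph{gap} version of succinct set cover instead of the plain decision version. Concretely, \cite[Theorem 7.2]{Ta2021} provides, unless $\Sigma_2^p = \Delta_2^p$, the non-existence of an $m^{1-\epsilon'}$-approximation for succinct set cover, where $m$ is the number of formulas; equivalently, there is no polynomial-time (oracle) procedure that distinguishes instances with optimum at most $k$ from instances with optimum greater than $k \cdot m^{1-\epsilon'}$. I would first record this gap formulation, then push an instance through the reduction of \Cref{thm:level2:hardness}.

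The key structural observation that makes the reduction gap-preserving is the tight correspondence between query sets and set covers established in the two directions of the correctness proof: a feasible succinct set cover $S$ yields the feasible query set $Q = \{y_j \mid j \in S\}$, and conversely (this is the "second direction" deferred to the appendix) any feasible query set can be turned into a feasible set cover of no larger size, because the only non-trivial items are the $y_j$'s and, by \Cref{prop:major:3}, querying $y_j$ covers exactly the constraints corresponding to variable assignments satisfying $\phi_j$. Hence $|Q^*| = \opt_{\mathrm{SSC}}$ \emph{exactly}, and moreover $n_0 = |\items \setminus \items_T| = m$ since the $y_j$'s are the only non-trivial items. So a $\rho$-approximation for \offKU with $\rho = n_0^{1-\epsilon}$ would, run on the constructed instance, return a feasible query set of size at most $\rho \cdot |Q^*| = m^{1-\epsilon} \cdot \opt_{\mathrm{SSC}}$, which by the exact correspondence translates back into a succinct set cover of size at most $m^{1-\epsilon} \cdot \opt_{\mathrm{SSC}}$, contradicting \cite[Theorem 7.2]{Ta2021} for a suitable choice of $\epsilon' $ relative to $\epsilon$.

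The steps, in order: (i) state the inapproximability of succinct set cover from \cite{Ta2021} in the form "no polynomial-time algorithm, even with an NP oracle, approximates the minimum number of formulas within $m^{1-\epsilon'}$"; (ii) take an arbitrary $\epsilon > 0$, set $\epsilon'$ to match (here $\epsilon' = \epsilon$ works directly since $n_0 = m$), and apply the reduction of \Cref{thm:level2:hardness} to an arbitrary succinct set cover instance; (iii) note $n_0 = m$ and invoke the exact identity $|Q^*| = \opt_{\mathrm{SSC}}$, citing both directions of the correctness argument; (iv) assume a $n_0^{1-\epsilon}$-approximation (with NP oracle) for \offKU exists, run it on the constructed instance, map its output back through the "query set $\Rightarrow$ set cover" direction, and derive an $m^{1-\epsilon}$-approximation for succinct set cover, contradicting \cite{Ta2021}.

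The main obstacle is verifying that the correctness of the reduction is genuinely \emph{approximation-preserving} and not merely decision-preserving — i.e., that the backward map from an arbitrary feasible query set $Q$ to a set cover $S$ with $|S| \le |Q|$ holds for \emph{every} $Q$, not just optimal ones. This is exactly the content of the "second direction" deferred to \Cref{app:hardness:of:approximation}, so I would need to make sure that argument only uses (a) that non-trivial items are precisely $Y = \{y_1,\dots,y_m\}$, hence $Q \subseteq Y$ without loss of generality (trivial items never help satisfy a constraint of~\eqref{ILP}), and (b) \Cref{prop:major:3}, which says $y_j$ covers exactly the constraints of assignments satisfying $\phi_j$; together these give that $S := \{j \mid y_j \in Q\}$ covers $\{0,1\}^n$ with $|S| = |Q \cap Y| \le |Q|$. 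A minor secondary point is to confirm the precise exponent in \cite[Theorem 7.2]{Ta2021} is of the form $m^{1-\epsilon'}$ (rather than, say, a constant or logarithmic factor) so the translation to $n_0^{1-\epsilon}$ is literal; assuming it is, the argument is routine once the exact cardinality correspondence is in hand.
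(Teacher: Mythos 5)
Your proposal is correct and follows essentially the same route as the paper: reuse the reduction of \Cref{thm:level2:hardness}, observe that it preserves solution sizes in both directions (so $|Q^*|$ equals the succinct set cover optimum and $n_0 = m$), and invoke the inapproximability of succinct set cover from~\cite{Ta2021} to get the contradiction. The only cosmetic difference is that the paper states the Ta-Shma bound in terms of the encoding size $N$ and bridges to $n_0$ via $n_0 = m \le N$ and monotonicity of $x \mapsto x^{1-\epsilon}$, whereas you state it directly in terms of $m$ — a harmless discrepancy, since an $m^{1-\epsilon}$-approximation is in particular an $N^{1-\epsilon}$-approximation.
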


\begin{proof}
    The reduction of~\Cref{thm:level2:hardness} satisfies the following: There is a solution for the succinct set cover instance of size $k$ if and only if there is a feasible query set of size $k$ for the constructed instance. Thus, the reduction is, in a sense, approximation factor preserving. 
    
    Furthermore, if $N$ is the encoding size of the input instance, then $n_0 \le N$ for the number of non-trivial items in the constructed instance. This directly follows from $N \ge m$ and $n_0 = m$. Note that $n_0=m$ holds by definition of the construction of~\Cref{thm:level2:hardness} since the reduction introduces exactly one non-trivial item $y_j$ for each formula $\phi_j$ and does not introduce any other non-trivial items.
    
    Let $f$ be a monotone non-decreasing function. Then we have $f(N) \ge f(n_0)$.  The observations above imply that if there is an $f(n_0)$-approximation (with access to an oracle deciding problems in NP) for the \offKU, then there is an $f(N)$-approximation (with access to an oracle deciding problems in NP) for succinct set cover.
    
    Since Ta-Shma et al.~\cite{Ta2021} showed that approximating succinct set cover within a factor of $N^{1-\epsilon}$ is $\sum_2^p$-hard for every $\epsilon > 0$, the theorem follows.
\end{proof}

\begin{remark}
    We remark that all results given in this section require large numerical input parameters. Hence, they do not prohibit the existence of pseudopolynomial algorithms.
\end{remark}

\section{Algorithmic Results}
\label{sec:offline:algorithms}

In this section, we give algorithms for $(\alpha,\beta)$-\offKU for different values of $\alpha$ and $\beta$. Motivated by the hardness results of the previous section, we show bounds on the size of the computed query sets in comparison to $|Q^*|$ instead of $|Q^*_{\alpha,\beta}|$. All our algorithms treat the two conditions on $(\alpha,\beta)$-feasible query sets (cf.~\Cref{def:alpha:beta:feasibility}) as two separate subproblems:
\begin{enumerate}
   \item Compute a query set $Q_1$ such that there exists a packing $P \subseteq Q_1 \cup \items_T$ with $p(P) \ge \frac{1}{\alpha} p^*$.
   \item Compute a query set $Q_2$ such that $U_P(Q_2) \le \beta \cdot p^*$ for all packings $P$.
\end{enumerate}
For our results, we solve these two problems and give bounds on $|Q_1 \cup Q_2|$ in terms of $|Q^*|$.

\subsection{The First Subproblem}

The following lemma solves the first subproblem for $\alpha = \frac{1}{1-\epsilon}$ by computing a packing $P$ with $p(P) \ge (1-\epsilon) \cdot p^*$ and $|P \setminus \items_T| \le |Q^*|$. The set $Q_1 = P\setminus \items_T$ satisfies the requirement of the subproblem and has $|Q_1| \le |Q^*|$. We prove the lemma by exploiting existing algorithms for two-dimensional knapsack.

\begin{restatable}{lemma}{lemSmallPacking}
    \label{lem:relax:small:packing}
    Fix an $\epsilon > 0$. Given an instance of \offKU, there exists a polynomial time algorithm that computes a packing $P$ with $p(P) \ge (1-\epsilon) \cdot p^*$ and $|P \setminus \items_T| \le |Q^*|$.
\end{restatable}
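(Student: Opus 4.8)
The plan is to reduce the first subproblem to a two-dimensional knapsack instance in which one dimension tracks the original weights $w_i$ and the second dimension tracks a $\{0,1\}$-cost indicating whether the item must be queried (i.e., whether $i \notin \items_T$). First I would compute (or rather, let an NP-oracle-free guess be replaced by) the optimal solution $Q^*$ of \offKU; but since we only need the \emph{cardinality} bound $|P \setminus \items_T| \le |Q^*|$ and $Q^*$ is not available in polynomial time, the cleaner approach is as follows. Consider the auxiliary two-dimensional knapsack instance with item set $\items$, profit $p_i$ for each item, first-dimension size $w_i$ with capacity $B$, and second-dimension size $c_i := \mathbf{1}[i \notin \items_T]$ with capacity $K$, where $K$ ranges over $\{0,1,\ldots,n\}$. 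For a fixed $K$, an optimal solution to this 2D-knapsack instance is a packing $P$ with $w(P) \le B$, $|P \setminus \items_T| \le K$, maximizing $p(P)$; denote its value $\mathrm{opt}(K)$.

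The key observation is that $Q^* \cup \items_T$ contains a packing of profit $p^*$ (by the first condition of \Cref{def:feasible}, which as noted is implied by feasibility), and that packing uses at most $|Q^*|$ non-trivial items; hence $\mathrm{opt}(|Q^*|) = p^*$. Therefore, if I knew the value $k := |Q^*|$, running a $(1-\epsilon)$-approximation (an FPTAS-type/PTAS result) for the 2D-knapsack instance with $K = k$ would yield a packing $P$ with $w(P) \le B$, $|P \setminus \items_T| \le k = |Q^*|$, and $p(P) \ge (1-\epsilon)\,\mathrm{opt}(k) = (1-\epsilon)\,p^*$, which is exactly the claim. To avoid needing $k$, I would instead iterate $K = 0,1,\ldots,n$, run the two-dimensional knapsack approximation algorithm for each value, and output the smallest $K$ — call it $K^\star$ — for which the returned packing has profit at least $(1-\epsilon)\,p^*$. (Here $p^*$ itself can be obtained since computing $p^*$ is "just" ordinary knapsack; if one wants a purely polynomial-time statement one runs the argument with $\mathrm{opt}(n)$ as the reference value, or observes that we only need the comparison to be internally consistent with the approximation guarantee — I would phrase it so that the reference value used is the approximate optimum of the 2D-knapsack with $K=n$, which equals $p^*$ up to the $(1-\epsilon)$ factor.) Monotonicity of $\mathrm{opt}(K)$ in $K$ guarantees that $K^\star \le k = |Q^*|$, so $|P \setminus \items_T| \le K^\star \le |Q^*|$.

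The main technical ingredient to cite is a PTAS (or, better, an efficient approximation scheme whose guarantee degrades gracefully with the number of dimensions) for the two-dimensional knapsack problem — e.g., the classical results of Frieze–Clarke or the PTAS for constant-dimensional packing — which gives, for fixed $\epsilon$, a polynomial-time algorithm returning a packing whose profit is at least $(1-\epsilon)$ times the optimum while respecting both capacity constraints exactly. The main obstacle I anticipate is purely bookkeeping: making sure the second-dimension capacity is respected \emph{exactly} (not just approximately) so that the cardinality bound $|P \setminus \items_T| \le K$ is hard rather than approximate — this is fine for the standard 2D-knapsack PTAS since it enforces feasibility exactly and only relaxes the objective — and making sure the reference quantity against which we threshold is legitimately $p^*$ (handled by the $K=n$ case, where every item may be queried and the 2D-knapsack optimum is exactly the ordinary knapsack optimum $p^*$). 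A minor point to check is that trivial items contribute $c_i = 0$, so they can always be included "for free" in the second dimension, matching the role of $\items_T$ in the definition of feasibility.
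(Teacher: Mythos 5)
Your proposal is essentially the paper's own proof: the same two-dimensional knapsack formulation (weights plus a unit cost on non-trivial items), the same use of the Frieze--Clarke PTAS, the same iteration over the cardinality budget with selection of the smallest budget whose approximate value meets a threshold relative to the budget-$n$ value, and the same key observation that the first feasibility condition gives an optimal packing with at most $|Q^*|$ non-trivial items. The only detail to tighten is that $p^*$ cannot be computed exactly in polynomial time, so the threshold must indeed be taken against the approximate $K=n$ value, and the internal accuracy should be set to $\epsilon' = 1-\sqrt{1-\epsilon}$ so that the two compounding $(1-\epsilon')$ losses still yield the claimed $(1-\epsilon)\cdot p^*$ bound.
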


\begin{proof}
    We give the following algorithm and prove that it satisfies the lemma:
    \begin{enumerate}
        \item Let $\epsilon' = 1-\sqrt{1-\epsilon}$.
        \item For every integer $\ell$ from $1$ to $n$, we formulate the following two-dimensional knapsack problem:
        \begin{equation*}
            \tag{\ensuremath{\mathcal{P}_{\ell}}}
            \label{eq:2dim}
            \begin{array}{lll}
                \max &\sum_{i \in \items} y_i \cdot p_i\\
                \text{s.t. }& \sum_{i \in \items} y_i \cdot w_i \le B \\
                & \sum_{i \in \items\setminus \items_T} y_i \le \ell \\
                & y_i \in \{0,1\}& \forall i \in \items
            \end{array}
        \end{equation*}
        In this problem, the second constraint ensures that the selected packing contains at most $\ell$ non-trivial items. We can use the PTAS given in~\cite{frieze1984}, to compute a $(1-\epsilon')$-approximation for the two-dimensional knapsack instance. For each $\ell \in \{1,\ldots,n\}$, let $p'_\ell$ denote the profit of the computed solution $(1-\epsilon')$-approximation for~\eqref{eq:2dim}.
        \item Let $\ell^*$ denote the smallest integer that satisfies $p'_{\ell^*} \ge (1-\epsilon') \cdot p'_n$.
        \item Return the packing $P'_{\ell^*}$ that was computed for~\eqref{eq:2dim} with $\ell = \ell^*$.
    \end{enumerate}
    
    \textbf{Running time.} The running time of the algorithm is dominated by the $n$ executions of the PTAS of~\cite{frieze1984} in step $2$. Since the running time of the PTAS is polynomial in the input size, the running time of the algorithm is also polynomial in the input size. 
    
    \textbf{Correctness.} The profit $p'_n$ satisfies $p^* \ge p'_n \ge (1-\epsilon') \cdot p^*$ since~\eqref{eq:2dim} with $\ell = n$ is equivalent to the knapsack instance that is given as part of the offline problem as the second constraint is trivially satisfied for all packings.

    The profit $p'_{\ell^*}$ satisfies $p'_{\ell^*} \ge (1-\epsilon') \cdot p'_n$  by definition of the third step of the algorithm. 
    Thus,
    $$p'_{\ell^*} \ge (1-\epsilon') \cdot p'_n \ge (1-\epsilon') \cdot (1-\epsilon') \cdot p^* = (1-\epsilon) p^*.$$
    
    It remains to argue $\ell^* \le |Q^*|$. To this end, let $P^*$ denote a packing of minimum $|P^*\setminus \items_T|$ among all optimal packings for the given knapsack instance. This directly implies $|P^*| \le |Q^*|$ as $Q^*$ needs to contain $P^*\setminus \items_T$ all optimal packings $P^*$ to induce a feasible solution for~\eqref{ILP}.

    Furthermore, we know that a packing $P$ with $|P\setminus \items_T| \le \ell^*-1$ has profit at most $\frac{1}{1-\epsilon'} \cdot p'_{\ell^*-1}$ as $p'_{\ell^*-1}$ is a $(1-\epsilon')$-approximation~\eqref{eq:2dim} with $\ell = \ell^* - 1$. By definition of step 3 of the algorithm we also have $p'_{\ell^*-1} < (1-\epsilon')p'_n \le (1-\epsilon')p^*$. Thus, a packing with $\ell^*-1$ non-trivial items can have profit at most
 
    $$
    \frac{1}{1-\epsilon'} \cdot p'_{\ell^*-1} < \frac{1}{1-\epsilon'} \cdot (1-\epsilon')p^* = p^*.
    $$ 
    This implies $|Q^*| \ge |P^*\setminus \items_T| \ge \ell^*$ and concludes the proof.
 \end{proof}

\subsection{The Second Subprobem: Prefix Problems}

For the second subproblem, we consider special packings that correspond to greedy solutions for the underlying knapsack problem.  
For an item $i \in \items$, define the \emph{density} $\density_i = \frac{p_i}{w_i}$  and the \emph{optimistic density} $\odensity_i = \frac{U_i}{w_i}$. For a query set $Q \subseteq \items$, define the optimistic density of an item $i$ \emph{after} querying $Q$ as $\bar{d}_i(Q)  = \bar{d}_i$ if $i \not\in Q$ and  $\bar{d}_i(Q)  = d_i$ if $i \in Q$. We use $\bar{\prec}_Q$ to denote the \emph{optimistic density order} after querying $Q$, that is, for $i,j \in \items$ we use $i \bar{\prec}_Q j$ to denote that $\bar{d}_i(Q) \ge \bar{d}_j(Q)$. We assume an arbitrary but fixed tiebreaking rule between the items to treat $\bar{\prec}_Q$ as a total order. This allows us to define \emph{optimistic prefixes} and the \emph{prefix problem}.

\begin{definition}[Optimistic prefixes]
    For a set $Q \subseteq \items$ and a parameter $0 \le C \le B$, define the \emph{optimistic prefix} $\bar{F}_C(Q)$ to be the maximal prefix $S$ of order $\bar{\prec}_Q$ such that $w(S) \le C$. To shorten the notation, we define $\bar{F}(Q) := \bar{F}_B(Q)$
\end{definition}

From the analysis of the knapsack greedy algorithm, it is well-known that the following holds for all packings $P$:
\begin{equation}
   \label{eq:bound}
   U_P(Q) \le U_{\bar{F}(Q)}(Q) + \max_{i \in \items} U_i(Q).    
\end{equation}
If we compute a set $Q$, such that $U_{\bar{F}(Q)} \le \beta' p^*$ and $\max_{i \in \items} U_i(Q) \le \beta' \cdot p^*$, then $Q$ solves the second subproblem for $\beta = 2 \cdot \beta'$, which motivates the following problem.

\begin{definition}[Prefix problem]
    Given an instance of \offKU and a threshold parameter $D \ge p^*$,  where $p^*$ is the optimal profit of the underlying knapsack instance, the \emph{prefix problem} asks for the set $Q \subseteq \items$ of minimum cardinality such that $U_{\bar{F}(Q)}(Q) \le D$.
\end{definition}

Unfortunately, the prefix problem preserves the hardness of knapsack, even if the given threshold is larger than $p^*$ by a constant factor. 

\begin{restatable}{theorem}{thmPrefixHardness}
   \label{thm:prefix:hardness}
    The prefix problem is weakly NP-hard for every $D = c \cdot p^*$ with $c \ge 1$.
\end{restatable}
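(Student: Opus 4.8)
The plan is to reduce from the decision version of the subset-sum problem (or equivalently, the partition problem), which is known to be weakly NP-hard~\cite{Karp72}. Given a subset-sum instance with positive integers $a_1, \ldots, a_n$ and target $T$, the goal is to decide whether some subset sums to exactly $T$. I would construct an \offKU instance together with a threshold $D = c \cdot p^*$ such that a feasible set $Q$ for the prefix problem of cardinality below some fixed value exists if and only if the subset-sum instance is a YES-instance. The key leverage is that the optimistic prefix $\bar F(Q)$ is a purely combinatorial object determined by the density order, but whether $U_{\bar F(Q)}(Q) \le D$ depends on the precise numeric values, so encoding the $a_i$ into weights/profits lets us hide a subset-sum question inside the single constraint $U_{\bar F(Q)}(Q) \le D$.

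The intended construction: use items whose densities are arranged so that, before any query, the optimistic prefix $\bar F(\emptyset)$ overshoots $D$, and querying an item $i$ both (a) may remove $i$ from the prefix or lower its contribution (because $d_i < \bar d_i$), and (b) can let other items enter the prefix. One natural design is to make all the "subset-sum" items share a common density (so they appear consecutively in $\bar\prec_\emptyset$, with ties broken by index), give each such item $i$ an uncertainty interval whose upper limit exceeds its true profit by an amount proportional to $a_i$, and calibrate $D$ and the capacity $B$ so that the prefix $\bar F(Q)$ always consists of the \emph{same} set of item slots regardless of $Q$ (a fixed window of the density order), whence $U_{\bar F(Q)}(Q) = U_{\bar F(\emptyset)} - \sum_{i \in Q \cap \bar F(Q)} (U_i - p_i)$. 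Then the condition $U_{\bar F(Q)}(Q) \le D$ becomes $\sum_{i \in Q \cap \bar F} (U_i - p_i) \ge U_{\bar F(\emptyset)} - D$, i.e. a threshold (knapsack-cover) condition on the chosen $Q$, and by setting $U_i - p_i = a_i$ and $U_{\bar F(\emptyset)} - D$ to equal $\sum a_i - T$, minimizing $|Q|$ subject to this becomes equivalent to asking whether a subset of total value exactly (or at least) $T$ exists among few items. To force $c \ge 1$ arbitrary, pad the instance with a trivial item $i^*$ of weight and profit $B$ that attains $p^*$, and scale the "cover slack" $U_{\bar F(\emptyset)} - D$ to be whatever multiple of $p^*$ is needed; since $D$ is allowed to be any $c\cdot p^*$ with $c \ge 1$, and enlarging $D$ only makes the prefix condition easier, we handle all $c$ by choosing the numeric magnitudes of the $a_i$-encoding large relative to $p^*$ while keeping the subset-sum structure intact.

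The steps I would carry out, in order: (1) state the subset-sum instance and fix the target relationship $U_{\bar F(\emptyset)} - D = \sum_i a_i - T$; (2) construct the item set — the subset-sum items with equal optimistic density and intervals $I_i = (p_i, p_i + a_i)$ up to scaling, a block of trivial items filling the density order so that the prefix window is exactly the subset-sum block, and the trivial item $i^*$ realizing $p^*$; (3) verify \Cref{prop:major:1}-style facts: compute $p^*$, check that $D = c\cdot p^*$ for the desired $c$, and confirm that $\bar F(Q) = \bar F(\emptyset)$ as a set for every $Q$ (this is where careful weight calibration is needed so that querying never changes which slots are in the capacity-$B$ prefix); (4) show $U_{\bar F(Q)}(Q) \le D \iff \sum_{i \in Q} (U_i - p_i) \ge \sum a_i - T \iff \sum_{i \notin Q} a_i \le T$; (5) conclude that a feasible $Q$ with $|Q| \le n - k$ exists iff some subset of size $\ge$ complementary bound has $a$-sum $\le T$, and pin this to an exact subset-sum/partition question (e.g. by taking the partition instance so that $T = \frac12\sum a_i$ and $|Q| = n/2$, or by a standard gadget forcing exactness), so that the minimum-cardinality prefix solution decides the subset-sum instance.

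The main obstacle I expect is step (3): ensuring that the optimistic prefix $\bar F(Q)$ is \emph{invariant} under queries as a set of item-slots, because a query lowers an item's density from $\bar d_i$ to $d_i$, which in general reorders the density order and can push the queried item out of the prefix and pull a later item in — changing $U_{\bar F(Q)}(Q)$ in a way not captured by the clean subtraction formula. I would address this either by making the density gap between the subset-sum block and the next block so large that even a fully-queried subset-sum item keeps a higher density than anything outside the block (so the block stays a prefix, only its internal order shuffles, which is harmless since we sum over the whole block), together with a capacity $B$ chosen so the prefix cuts exactly at the block boundary; or, if that is impossible to arrange with integer weights, by allowing the prefix to lose queried items but compensating with carefully tuned "filler" items so the net effect on $U_{\bar F(Q)}(Q)$ is still the subset-sum expression. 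Getting this calibration right with polynomially-many decimal digits (so the reduction is genuinely polynomial, and the hardness is only \emph{weak}) is the delicate part; everything else is bookkeeping analogous to the classical subset-sum hardness argument referenced in~\cite{AlgorithmDesign}.
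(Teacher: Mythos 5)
Your reduction source (subset sum) matches the paper's, but your central design choice creates a genuine gap. You aim to calibrate the instance so that $\bar F(Q)$ is \emph{invariant} as a set under queries, so that feasibility collapses to the single covering condition $\sum_{i \in Q}(U_i - p_i) \ge U_{\bar F(\emptyset)} - D$. But once the condition has this one-sided form, the prefix problem on your instance is solvable in polynomial time: to minimize $|Q|$ subject to $\sum_{i\in Q}(U_i-p_i)\ge K$ you simply take items greedily in decreasing order of $U_i - p_i$. In particular your step (5) fallback does not help: ``is there $Q$ with $|Q|\le \ell$ and $\sum_{i\in Q} a_i \ge S-T$'' is answered by taking the $\ell$ largest $a_i$, and the partition variant with $|Q| = n/2$ and threshold $\tfrac12\sum a_i$ is always a YES-instance. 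A lower-bound-only constraint can never force the ``exactly $T$'' structure of subset sum, because overshooting is always permitted; so no gadget of the kind you sketch can rescue the invariant-prefix design.

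The paper's proof works precisely because the prefix is \emph{not} invariant: the subset-sum items (weight $a_i$, $U_i = c\cdot a_i$, tiny true profit) fall behind a family of heavy ``blocking'' items once queried, and the blocking items are sized (weight $W+H+1$, upper limit $c(W-H)$) so that a blocking item slips into the capacity-$B$ prefix exactly when the queried weight exceeds $H$, which again pushes $U_{\bar F(Q)}(Q)$ above $c\cdot p^*$. Thus feasibility forces the queried weight to be at least $H$ (to shrink the prefix's upper limit enough) \emph{and} at most $H$ (to keep every blocking item out), i.e., exactly $H$ --- and this two-sided trap is what encodes the subset-sum exactness. Your ``main obstacle'' paragraph treats the reordering caused by queries as a nuisance to be neutralized, when it is in fact the only source of hardness available here; as written, the proposal would not yield NP-hardness.
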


\begin{proof}
    We show the statement by reduction from the weakly NP-hard~\cite{GareyJ79} \emph{subset sum problem}, where we are given a set $A= \{a_1,\ldots, a_n\} \subseteq \mathbb{N}$ and an integer parameter $H$. The goal is to decide whether there exists a subset $S \subseteq A$ with $\sum_{a_i \in S} a_i = H$.
    Let $W = \sum_{a_i \in A} a_i$ and assume w.l.o.g.~that $H \le \frac{W}{2}$ (otherwise we can replace $H$ with $W-H$ to reach an equivalent problem that satisfies the inequality). Furthermore, assume $W \ge 3$, which is true for all non-trivial problem instances. 
 
    \paragraph*{Construction.} Given such an instance, we construct an instance of the prefix problem as follows:
    \begin{enumerate}
        \item For each $a_i \in A$, we construct a knapsack item $k_i$ with
        \begin{enumerate}
            \item $w_{k_i} = a_i$,
            \item $U_{k_i} = c \cdot a_i$ and $L_{k_i} = 0$,
            \item and $p_{k_i} = \epsilon \cdot a_i$ for a sufficiently small $\epsilon > 0$ with  $\frac{1}{W} > \epsilon$ and $\frac{c\cdot (W-H)}{W+H + 1} > \epsilon$.
        \end{enumerate}
        We refer to the items $k_1,\ldots,k_n$ as \emph{normal items} and define $N = \{k_1,\ldots,k_n\}$.
        \item Define the knapsack capacity as $B = 2\cdot W$.
        \item Introduce $n$ items $b_1,\ldots, b_n$ with 
        \begin{enumerate}
            \item $w_{b_j} = B - W + H + 1$,
            \item $U_{b_j} = c \cdot (W-H)$ and $L_{b_j} = 0$,
            \item and $p_{b_j} = W-H$.
        \end{enumerate}
        We refer to the items $b_1,\ldots,b_n$ as \emph{blocking items}.
    \end{enumerate}
 
    \paragraph*{Properties of the constructed instance.}
    In the constructed instance, the normal items $k_i$ have an optimistic density of $\odensity_{k_i} = \frac{U_{k_i}}{w_{k_i}} = \frac{c a_i}{a_i} = c$ and the blocking items $b_j$ have an optimistic density of $$\odensity_{b_j} = \frac{c\cdot (W-H)}{B-W+H + 1} = \frac{c\cdot (W-H)}{W+H + 1}  < c.$$ 
    Thus, the optimistic density order $\bar{\prec}_\emptyset$ starts with the normal items in some order followed by the blocking items in some order. This implies $\bar{F}(\emptyset) = \{k_1,\ldots, k_n\}$, as the normal items together have weight $w(N) = W$, which leaves no space for any blocking item within the knapsack capacity $B$. More precisely, the remaining capacity is $B-w(N)= B - W = W$, but a single blocking item needs capacity $B-W+H+1 = W+H+1 > W$. 
 
    By definition of the profits $p_{k_i}$ for the normal items $k_i$, the optimal packing with respect to the profits certainly packs at least one blocking item $b_j$. This is because the profit of a single blocking item is $W-H > W - \frac{W}{2} > 1$ (using the assumption that $W \ge 3$), while the profit of all normal items combined is $p(N) = n \cdot \epsilon \cdot w(N) = \epsilon \cdot W < 1$. 
    
    Since two blocking items have a combined weight of
    $$
    2 \cdot (B-W+H+1) = 2 \cdot B - 2 \cdot W + 2 \cdot H + 2 = 4W - 2W + 2 \cdot H + 2 > 2W + 2 > B,
    $$
    the optimal packing contains exactly one blocking item. The remaining space in the knapsack besides the blocking item is $B - (B-W+H+1) = W-H-1$. Thus, the optimal packing can fill the remaining space with normal items with a total weight of at most $W-H-1$. By the assumption that $D \le \frac{W}{2}$, the remaining space is at least $\frac{W}{2}-1$. We can assume without loss of generality that at least one normal item has a weight of at most $\frac{W}{2} - 1$. This is, because subset sum instances where all items have weight at least $\frac{W}{2}$ are trivial and, thus, can be excluded. This implies that the optimal packing with respect to the profits packs one blocking item and at least one normal item. Furthermore, the weight of the packed normal items is at most $W-H-1$. Therefore, the total profit $p^*$ of the optimal packing satisfies  
    $$
    W-H < p^* \le W-H + \epsilon \cdot (W-H-1) \le W-H + \epsilon \cdot W < W-H + 1.
    $$
    Note that the profits of the normal items are \emph{not} integer, hence the inequality $ W-H < p^* < W-H + 1$ is not a contradiction.
 
    Since $\bar{F}(\emptyset) = N$, we have $U_{\bar{F}(\emptyset)}(\emptyset) = \sum_{i = 1}^{n} U_{k_i}(\emptyset) = \sum_{i = 1}^{n} U_{k_i} = c \cdot W >  c \cdot (W-H+1) > c \cdot p^*$. This implies that every feasible solution $Q$ to the prefix problem instance must contain at least one normal item. By definition of the weights and profits, $k_i \in Q$ and $b_j \not\in Q$ for a normal item $k_i$ and a blocking item $b_j$ implies that 
 
    $$\odensity_{k_i}(Q) = \epsilon <  \frac{c\cdot (W-H)}{W+H + 1} = \odensity_{b_j}(Q).$$

    Hence, if we query $k_i$ but not $b_j$, the relative order between those items changes from  $\bar{\prec}_\emptyset$ to $\bar{\prec}_Q$.

    \paragraph*{Correctness.}
    To finish the proof, we show that there exists a feasible solution $Q$ for the constructed prefix problem instance with $Q < n$ if and only if there exists a subset $S\subseteq A$ with $\sum_{a_i \in S} a_i = H$.
 
    \textbf{First direction: } Assume there exists a subset $S\subseteq A$ with $\sum_{a_i \in S} a_i = H$. Consider the solution $Q = \{ k_i \in N \mid a_i \in S\}$ for the prefix problem. By assumption that $H \le \frac{W}{2}$, we have $|Q| = |S| < n$. It remains to argue that $Q$ satisfies $U_{\bar{F}(Q)}(Q) \le c \cdot p^*$. 
 
    Consider the normal items $N\setminus Q$. Since these items are part of $\bar{F}(\emptyset) = N$ and not part of $Q$, we have $N\setminus Q \subseteq \bar{F}(Q)$ as the optimistic density of these items is the same before and after querying $Q$ and the optimistic density of other items can only decrease by being queried. By assumption, we have $w(N\cap Q) = H$ and, therefore $w(N\setminus Q) = W-H$. Thus, the remaining space in $\bar{F}(Q)$ besides the items $N\setminus Q$ is 
    $$
    B- (W- H) = 2W - W + H = W + H.
    $$
    Since no blocking item is part of $Q$, the blocking items $b_j$ have the largest optimistic density besides the items $N\setminus Q$ (as we argued above). However, since $w_{b_j} = W + H + 1 > W +H$, no blocking items fits into the knapsack besides the items $N\setminus Q$. Thus, $\bar{F}(Q) = N\setminus Q$. This implies 
    $$
    U^F(Q) = U_{N\setminus Q}(Q) = c \cdot w(N\setminus Q) = c \cdot (W-H) < c \cdot p^*.
    $$
    We can conclude that $Q$ is feasible for the prefix problem instance.
 
    \textbf{Second direction:} Assume that there exists no subset $S \subseteq A$ with $\sum_{a_i \in S} a_i = H$. For the sake of contradiction, assume that $Q$ is a feasible solution to the constructed density prefix instance with $|Q| < n$. This implies that $Q$ contains neither all normal items nor all blocking items. By assumption, we have (i) $w(Q\cap N) < H$ or (ii) $w(Q\cap N) > H$.
 
    First, consider case (i). As argued in the first direction, the items $N \setminus Q$ must be part of $\bar{F}(Q)$. These items have a weight of $w(N\setminus Q) = w(N) - w(N\cap Q) > W - H$.  Since $W - H$ and $w(N\setminus Q)$ are integer, $w(N\setminus Q) > W - H$ implies $w(N\setminus Q) \ge W - H + 1$.
    Furthermore, the items satisfy $U_{N\setminus Q}(Q) = c \cdot w(N\setminus Q) \ge c \cdot (W-H+1)$. This implies 
    $$
    U_{\bar{F}(Q)}(Q) \ge U_{N\setminus Q}(Q) \ge c \cdot (W-H+1) > c \cdot p^*
    $$
    and $Q$ is not feasible for the prefix problem; a contradiction.
 
    Next, consider case (ii). In this case, the items $N\setminus Q$ have a weight $w(N\setminus Q) = W - w(N\cap Q) < W-H$. Since $W-H-1$ and $w(N\setminus Q)$ are integer, $w(N\setminus Q) < W-D$ implies $w(N\setminus Q) \le W-H-1$.
    Thus, the remaining space within the knapsack capacity besides the items $N\setminus Q$ is at least
    $$
    B - w(N\setminus Q) \ge 2W -W +H + 1 = W + H + 1.	
    $$
    By the assumption that $|Q| < n$, the item with the largest optimistic density besides $N\setminus Q$ after querying $Q$ is a blocking item $b_j \not\in Q$. By the calculation above, this blocking item still fits into the knapsack, i.e., $b_j \in \bar{F}(Q)$. As additionally $N\setminus Q \not= \emptyset$ by the assumption that $|Q| < n$, this implies
    $$
    U_{\bar{F}(Q)}(Q) \ge c \cdot (W-H) + c \cdot w(N\setminus Q) \ge c \cdot (W-H+1) > c \cdot p^*
    $$
    and, thus, $Q$ is not feasible for the prefix problem; a contradiction.
 \end{proof}

On the positive side, the problem can be solved to optimality in pseudopolynomial time.
 
\begin{restatable}{theorem}{thmPrefixPseudo}
    \label{thm:prefix:pseudo}
    The prefix problem can be solved in pseudopolynomial time.
\end{restatable}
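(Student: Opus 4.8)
The plan is to reduce the prefix problem, by enumeration, to polynomially many pseudopolynomially-solvable knapsack-type subproblems. The key observation is that $\bar{F}(Q)$ is always a prefix of the total order $\bar{\prec}_Q$, hence is determined by the \emph{break item} $b$ — the first item of $\bar{\prec}_Q$ \emph{not} in $\bar{F}(Q)$ — together with the post-query density of $b$. I would therefore enumerate all $n$ candidates for $b$ and, for each, the two candidates for its post-query density ($\bar{d}_b$ in case $b \notin Q$, or $d_b$ in case $b \in Q$); this fixes a threshold key $\theta$ in the order. (The case $\bar{F}(Q) = \items$, which has no break item, is handled separately: if $w(\items) \le B$ one greedily queries the items with the largest $U_i - p_i$ until $\sum_{i \notin Q} U_i + \sum_{i \in Q} p_i \le D$.) Once $\theta$ is fixed, whether an item $i \ne b$ lies in $\bar{F}(Q)$ depends only on the position of $\theta$ relative to the key of $i$ with density $\bar{d}_i$ and the key of $i$ with density $d_i$: if both lie below $\theta$, then $i \notin \bar{F}(Q)$ no matter what (a \emph{fixed-out} item, never worth querying); if both lie above $\theta$, then $i \in \bar{F}(Q)$ no matter what (a \emph{fixed-in} item, whose contribution to $U_{\bar{F}(Q)}(Q)$ drops from $U_i$ to $p_i$ if queried, at cost one); otherwise ($d_i$ below, $\bar{d}_i$ above) we may choose — leaving $i$ unqueried places it in $\bar{F}(Q)$ with weight $w_i$ and contribution $U_i$, while querying it removes it from $\bar{F}(Q)$ at cost one (a \emph{borderline} item).

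Having fixed $b$ and $\theta$, the remaining task is a two-constraint knapsack over the fixed-in and borderline items: choose one option per item so that (i) the total weight of the resulting prefix lies in $(B - w_b,\, B]$, where the upper bound says $\bar{F}(Q)$ is a packing and the lower bound encodes maximality of the prefix (that $b$ does not fit after it), and (ii) the total $U$-value of the prefix, with the reductions from queried fixed-in items applied, is at most $D$; subject to this, minimize the number of queries (plus one for $b$ itself in the branch $b \in Q$). I would solve this with a dynamic program whose state is (number of items processed, total prefix weight so far, number of queries used so far) and whose value is the minimum achievable total $U$-value of the prefix, the transitions implementing the two options per item. This table has $O(n^2 B)$ entries with $O(1)$ work each, and is pseudopolynomial because the weights are integers bounded by $B$; importantly, the possibly real-valued quantities $U_i$ and $p_i$ enter only through the stored $U$-value, never as an index, so no discretization of profits is needed. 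For each pair $(b, \theta)$ one reads off the smallest query count over all admissible weights with stored value at most $D$; minimizing over the $O(n)$ such pairs and the special case yields the optimum, in total time $O(n^3 B)$.

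Correctness has the usual two directions. For \textbf{soundness}, one checks that any selection produced by the DP for a pair $(b,\theta)$ induces a query set $Q$ for which the post-query key of $b$ is exactly $\theta$, the items of key above $\theta$ are precisely the selected prefix, and $b$ is the next item of $\bar{\prec}_Q$ and does not fit — so $\bar{F}(Q)$ equals the selected set and $U_{\bar{F}(Q)}(Q) \le D$. For \textbf{completeness}, given an optimal $Q^\dagger$ one takes $b$ to be its break item and $\theta$ its post-query key; since $\bar{F}(Q^\dagger)$ is a prefix of $\bar{\prec}_{Q^\dagger}$ with $b$ the first item outside it, $\bar{F}(Q^\dagger)$ is exactly the set of items with key above $\theta$, and the restriction of $Q^\dagger$ to the fixed-in and borderline items (queried fixed-out items can be dropped without increasing $U_{\bar{F}(Q^\dagger)}(Q^\dagger)$) is a feasible DP selection of the same cost. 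I expect the main obstacle to be exactly this bookkeeping: verifying that the key classification together with the constraint ``prefix weight in $(B - w_b, B]$'' pins down $\bar{F}(Q)$ \emph{exactly} — neither dropping an item of $\bar{F}(Q)$ that still fits nor accidentally admitting $b$ — and dealing with the fixed tiebreaking so that $b$ is the unique item whose post-query key equals $\theta$. The dynamic program and the running-time analysis are then routine.
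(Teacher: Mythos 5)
Your proposal is correct, and it takes a genuinely different (and in fact leaner) route than the paper. The paper's proof guesses \emph{four} things about an optimal solution $Q_F^*$ --- the last item $i_1$ of $\bar{F}(Q_F^*)$, the first item $i_2$ outside it, their query statuses, and the cardinalities $n_1=|S_1\cap Q_F^*|$, $n_2=|S_2\cap Q_F^*|$ --- which forces the sets $A$ (must query) and $R$ (never query), and then handles the two remaining classes separately: a pseudopolynomial knapsack-type DP on $S_1$ (maximize the queried $U$-mass subject to a weight window $[K,H]$ and exact cardinality $n_1$) and a greedy choice of the $n_2$ largest $U_i-p_i$ in $S_2$. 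You instead enumerate only the break item $b$ and whether it is queried, observe that the resulting threshold key $\theta$ classifies every other item as fixed-out, fixed-in, or borderline, encode both ``$\bar{F}(Q)$ is a packing'' and maximality of the prefix by the single window $w(\bar{F}(Q))\in(B-w_b,B]$, and run one DP indexed by (item, prefix weight, number of queries) whose value is the minimum achievable $U_{\bar{F}(Q)}(Q)$; your completeness/soundness bookkeeping (drop queried fixed-out items, distinct keys via the fixed tiebreak, the $w(\items)\le B$ corner case) is exactly what is needed and all goes through, since $d_i\le\bar d_i$ guarantees your three-way classification is exhaustive. What each approach buys: yours avoids guessing $i_1$, $n_1$, $n_2$ altogether, needs no adjacency argument for $i_1,i_2$, and is faster ($O(n^3B)$ versus the paper's $O(n^6\sum_i w_i)$); the paper's explicit split into the $S_1$-ILP \eqref{density:sub:ILP} and the $S_2$-greedy is, however, precisely the structure it later relaxes (basic feasible LP solutions with at most two fractional variables) to obtain the polynomial-time guarantee of \Cref{thm:prefix:nonpseudo}, a follow-up your monolithic weight-indexed DP does not directly provide --- though that is beyond the statement you were asked to prove.
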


We give the full proof of~\Cref{thm:prefix:pseudo} in~\Cref{app:algorithms}, but highlight the main ideas here.  In the following, we use $Q^*_F$ to refer to an optimal solution for the prefix problem. 

Assume for now, that the algorithm knows the last item $i_1$ in the prefix $\bar{F}(Q^*_F)$ and the first item  $i_2$ outside of $\bar{F}(Q^*_F)$ in the order $\bar{\prec}_{Q^*_F}$  (\Cref{fig1} (a)) and, for the sake of simplicity, assume that $i_1,i_2 \not\in Q^*_F$. We design our algorithm to reconstruct the optimal solution $\bar{F}(Q^*_F)$ (or a similar solution) using just the knowledge of $i_1$ and $i_2$.

\begin{figure}[t]
   \begin{tikzpicture}[scale = 0.75, transform shape]
       \node[] at (-5.5,0.5){(a)};

       \draw [draw=black,thick] (-5,0) rectangle ++(5,1);
       \node[] at (-2.5,0.5){$\bar{F}(Q_F^*) \setminus \{i_1\}$};

       \draw [draw=blue,thick] (0,0) rectangle ++(1.5,1);
       \node[] at (0.75,0.5){$\textcolor{blue}{i_1}$};

       \draw[red,thick,dashed] (2.5,1.5) -- (2.5,-0.5);
     
       \draw [draw=black,thick] (2.55+1.5,0) rectangle ++(7.5,1);
       \node[] at (2.55+1.5+3.75,0.5){$\items \setminus (\bar{F}(Q_F^*) \cup \{i_2\})$};

       \draw [draw=orange,thick] (2.55,0) rectangle ++(1.5,1);      
       \node[] at (2.55+0.75,0.5){$\textcolor{orange}{i_2}$};
       \node[] at (2.5,1.75){$\textcolor{red}{B}$};
   \end{tikzpicture}

   \begin{tikzpicture}[scale = 0.75, transform shape]
       \node[] at (-5.5,0.5){(b)};

       \draw [draw=black,thick] (4.5,0) rectangle ++(2.5,1);
       \node[] at (4.5+1.25,0.5){$A$};

       \draw [draw=black,thick] (8.5,0) rectangle ++(2,1);
       \node[] at (8.5+1,0.5){$R$};

       \draw [draw=black,thick] (-5,0) rectangle ++(8,1);
       \node[] at (-5+4,0.5){$S$};

       \draw [draw=blue,thick] (3,0) rectangle ++(1.5,1);
       \node[] at (3.75,0.5){$\textcolor{blue}{i_1}$};

       \draw [draw=orange,thick] (7,0) rectangle ++(1.5,1);      
       \node[] at (7+0.75,0.5){$\textcolor{orange}{i_2}$};

       \draw[red,thick,dashed] (2.5,1.5) -- (2.5,-0.5);
       \node[] at (2.5,1.75){$\textcolor{red}{B}$};

   \end{tikzpicture}

   \begin{tikzpicture}[scale = 0.75, transform shape]
       \node[] at (-5.5,0.5){(c)};

       \draw [draw=black,thick] (6,0) rectangle ++(4.5,1);
       \node[] at (6+2.25,0.5){$R \cup A$};

       \draw [draw=black,thick] (-5,0) rectangle ++(8,1);
       \node[] at (-5+4,0.5){$S$};

       \draw [draw=blue,thick] (3,0) rectangle ++(1.5,1);
       \node[] at (3.75,0.5){$\textcolor{blue}{i_1}$};

       \draw [draw=orange,thick] (4.5,0) rectangle ++(1.5,1);      
       \node[] at (4.5+0.75,0.5){$\textcolor{orange}{i_2}$};

       \draw[red,thick,dashed] (2.5,1.5) -- (2.5,-0.5);
       \node[] at (2.5,1.75){$\textcolor{red}{B}$};

   \end{tikzpicture}

   \begin{tikzpicture}[scale = 0.75, transform shape]
       \node[] at (-5.5,0.5){(d)};

       \draw [draw=black,thick] (2.55+1.5,0) rectangle ++(7,1);
       \node[] at (2.55+1.5+3.5,0.5){$R \cup A \cup S_1'$};

       \draw [draw=black,thick] (-5,0) rectangle ++(5.5,1);
       \node[] at (-5+2.25,0.5){$S\setminus S_1'$};

       \draw [draw=blue,thick] (0.5,0) rectangle ++(1.5,1);
       \node[] at (0.5+0.75,0.5){$\textcolor{blue}{i_1}$};

       \draw [draw=orange,thick] (2.55,0) rectangle ++(1.5,1);      
       \node[] at (2.55+0.75,0.5){$\textcolor{orange}{i_2}$};

       \draw[red,thick,dashed] (2.5,1.5) -- (2.5,-0.5);
       \node[] at (2.5,1.75){$\textcolor{red}{B}$};

   \end{tikzpicture}
   \caption{Illustration of the algorithmic ideas used to prove~\Cref{thm:prefix:pseudo}.}
   \label{fig1}
\end{figure}
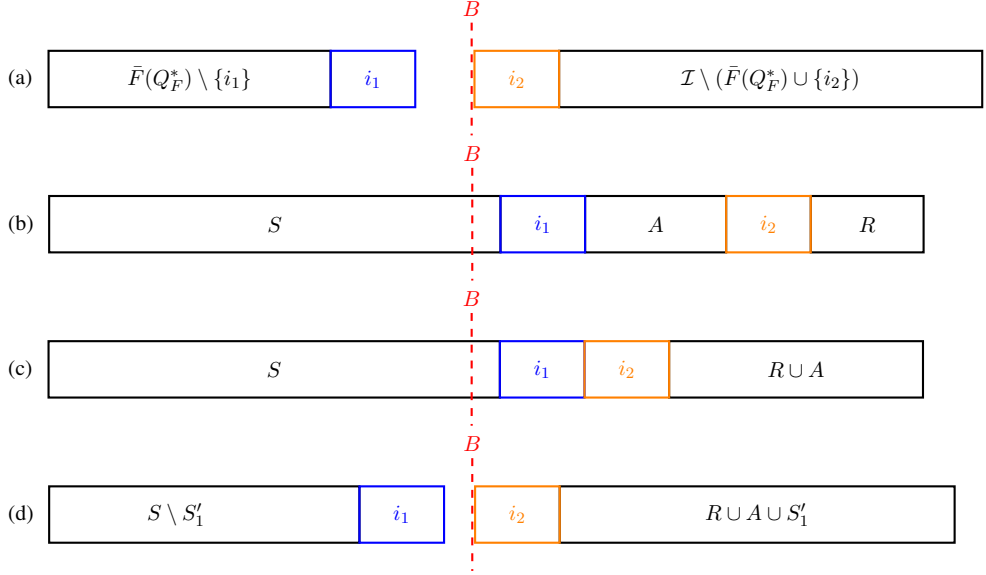

If we look at the same two items $i_1,i_2$ in the initial optimistic density order $\bar{\prec}_\emptyset$, then there can be a subset of items $S$ before $i_1$, a subset of items $A$ between $i_1$ and $i_2$ and a subset of items $R$ after $i_2$ (\Cref{fig1} (b)).
Based on $i_1$ and $i_2$ being next to each other in order $\bar{\prec}_{Q^*_F}$, we can immediately deduce that $A \subseteq Q^*_F$ as items in $A \setminus Q_F^*$ would still be between $i_1$ and $i_2$ in $\bar{\prec}_{Q^*_F}$. Thus, the algorithm can safely add $A$ to its solution as the optimal solution does the same. Similarly, from $i_2$ being the first item outside of $\bar{F}(Q^*_F)$, it is clear that $R \cap Q_F^* = \emptyset$ as the items in $R$ stay outside the prefix $\bar{F}(Q^*_F)$ whether they are queried or not. Hence, the algorithm can safely ignore such items.

In the order $\bar{\prec}_A$, i.e., in the order after adding $A$ to the solution, the items $i_1$ and $i_2$ must already be next to each other. However, we still can have $i_1 \not\in \bar{F}(A)$, that is, $i_1$ might not yet be part of the prefix (\Cref{fig1} (c)). To fix this, the algorithms needs to query items from the set $S_1 \subseteq S$, which contains items that are before $i_1$ in the order $\bar{\prec}_A$ but would move behind $i_2$ if they are added to the solution. To reconstruct the optimal solution, the algorithm has to query these items in such a way that $i_1$ enters the prefix but $i_2$ does not. On the one hand, the algorithm should select items $i \in S_1$ with high $U_i$ as the items will leave the prefix and the goal of the prefix problem is to decrease the upper limit of the prefix below the threshold $D$. On the other hand, the algorithm needs to make sure not to query too many items in $S_1$, so the cardinality of the solution does not grow too large. If $K$ is the minimum amount of weight that needs to be queried for $i_1$ to enter the prefix, $D$ is the maximum amount of weight that can be queried before $i_2$ enters the prefix, and $n_1$ is the number of queries that the algorithm can afford, then the algorithm should select its queries from $S_1$ according to the following ILP, which we show to be solvable in pseudopolynomial time:
\begin{equation}
   \tag{\ensuremath{\mathcal{P}_{S_1}}}\label{density:sub:ILP}
   \begin{array}{lll}
       \max &\sum_{i \in S_1} x_i \cdot U_i\\
       \text{s.t. }& \sum_{i \in S_1} x_i \cdot w_i \ge K & \\
       & \sum_{i \in S_1} x_i \cdot w_i \le H & \\
       & \sum_{i \in S_1} x_i = n_1 & \\
       & x_i \in \{0,1\}& \forall i \in S_1
   \end{array}
\end{equation}

After adding the solution $S_1'$ of~\eqref{density:sub:ILP} to the solution, the prefix $\bar{F}(A \cup S_1')$ of the algorithm has reached roughly the same configuration as $\bar{F}(Q^*_F)$: $i_1$ is last in the prefix and $i_2$ is first outside the prefix (\Cref{fig1} (d)). However, we still might have $U_{\bar{F}(A \cup S_1')}(A \cup S_1') > D$. To fix this, the algorithm has to query items of $S \setminus S_1'$ that stay in front of $i_1$ in the optimistic density order even after being queried. Since these items are part of the prefix $\bar{F}(A \cup S_1')$ \emph{and} will stay part of the prefix even after being queried, the algorithm should greedily query such items $i$ with large $U_i - p_i = U_{\bar{F}(A \cup S_1')}(A \cup S_1') - U_{\bar{F}(A \cup S_1' \cup \{i\})}(A \cup S_1' \cup \{i\})$ until the solution becomes feasible for the prefix problem instance.
Our algorithm for~\Cref{thm:prefix:pseudo} is based on exactly this approach. We show in~\Cref{app:algorithms} how to formalize this and get rid of all assumptions that were used in the description above.

To achieve polynomial running time, we use the same approach but instead of~\eqref{density:sub:ILP} we solve a certain LP-relaxation with the property that an optimal basic feasible solution has at most two fractional variables. Omitting the fractional items leads to the following corollary. 

\begin{restatable}{corollary}{thmPrefixNonPseudo}
    \label{thm:prefix:nonpseudo}
    Given an instance of the prefix problem with threshold parameter $D$, let $Q^*_F$ denote an optimal solution to the instance. There exists a polynomial time algorithm that computes a set $Q$ with $|Q| \le |Q^*_F|$ such that $U_{\bar{F}(Q)}(Q) \le D + 2 \cdot \max_{i \in \items} U_i$.
\end{restatable}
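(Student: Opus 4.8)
The plan is to mimic the pseudopolynomial algorithm behind \Cref{thm:prefix:pseudo} but replace the exact knapsack-type subproblem~\eqref{density:sub:ILP} by its natural LP relaxation, and to show that relaxing it costs us at most $2 \cdot \max_{i \in \items} U_i$ in the upper limit of the resulting prefix, while keeping the cardinality bound $|Q| \le |Q^*_F|$ intact and the running time polynomial. First I would recall the structure of the pseudopolynomial algorithm: it guesses (by enumerating all $O(n^2)$ choices) the last item $i_1$ of $\bar F(Q^*_F)$ and the first item $i_2$ outside $\bar F(Q^*_F)$ in the order $\bar\prec_{Q^*_F}$, together with the membership status of $i_1,i_2$ in $Q^*_F$; this partitions $\items$ into $S$ (before $i_1$ in $\bar\prec_\emptyset$), $A$ (between $i_1$ and $i_2$), and $R$ (after $i_2$), and forces $A \subseteq Q$, $R \cap Q = \emptyset$. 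It then computes the threshold quantities $K$, $H$, $n_1$, solves~\eqref{density:sub:ILP} on $S_1 \subseteq S$, and finally greedily queries items of $S \setminus S_1'$ that remain before $i_1$. The modification is: in the middle step, solve the LP relaxation of~\eqref{density:sub:ILP} (allowing $x_i \in [0,1]$) instead of the integer program.

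The key steps I would carry out are as follows. (1) Argue that the LP relaxation is feasible whenever the corresponding integer program used by the pseudopolynomial algorithm is feasible — this holds because $Q^*_F$ restricted to $S_1$ gives an integral feasible point, hence the LP optimum is at least the IP optimum, in particular at least $\sum_{i \in S_1 \cap Q^*_F} U_i$. (2) Invoke the standard fact that an optimal basic feasible solution of an LP with three non-trivial constraints (the two weight constraints $\sum x_i w_i \ge K$, $\sum x_i w_i \le H$ and the cardinality constraint $\sum x_i = n_1$) plus box constraints $0 \le x_i \le 1$ has at most three tight-in-the-interior variables, and in fact — because the cardinality constraint is an equality — at most two variables are strictly fractional; let $S_1'$ be the set of items with $x_i = 1$ in this basic solution. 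Then $|S_1'| \le n_1$, so rounding down the (at most two) fractional variables to $0$ only decreases the cardinality, and crucially $\sum_{i \in S_1'} U_i \ge (\text{LP optimum}) - 2 \max_{i} U_i \ge \sum_{i \in S_1 \cap Q^*_F} U_i - 2\max_i U_i$. (3) Feed $S_1'$ into the same final greedy phase as in the pseudopolynomial algorithm. (4) Carry through the accounting: in the pseudopolynomial proof, the exact solution $S_1^\star$ of~\eqref{density:sub:ILP} satisfies, after the greedy phase, $U_{\bar F(Q)}(Q) \le D$ essentially because the amount of ``upper limit weight'' left in the prefix is controlled by $\sum_i U_i$ over the queried prefix items, which matches or beats $Q^*_F$. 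Replacing $S_1^\star$ by $S_1'$ changes this quantity by at most $2\max_i U_i$ because (a) the cardinality is no larger, so the greedy phase can do at least as much, and (b) the only loss is the dropped $U$-mass of the $\le 2$ fractional items, each bounded by $\max_i U_i$. One subtlety: dropping the fractional items might reduce $\sum_{i \in S_1'} x_i w_i$ below $K$, so $i_1$ might not quite enter the prefix of the algorithm's solution; I would handle this exactly as the slack is handled elsewhere — either the missing weight is at most $2 \max_i w_i$ which can be absorbed into the same $2\max_i U_i$ slack via the greedy phase, or (cleaner) note that the final guarantee we are proving already has a $2\max_i U_i$ additive slack, so we need only ensure $i_1$'s position and $i_2$'s exclusion hold \emph{approximately}, and the upper-limit bound degrades by at most the $U$-value of those two border items.

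I expect the main obstacle to be the careful bookkeeping in step (4): precisely quantifying how the two fractional variables propagate through the ``$i_1$ enters the prefix / $i_2$ stays out'' invariant and the subsequent greedy phase, and confirming that the total damage is exactly $2\max_{i \in \items} U_i$ and not, say, $2\max_i U_i$ plus some lower-order term depending on weights. The clean way around this is to set up the argument so that we never need the algorithm's prefix to coincide combinatorially with $\bar F(Q^*_F)$: we only need $U_{\bar F(Q)}(Q) \le U_{\bar F(Q^*_F)}(Q^*_F) + 2\max_i U_i \le D + 2\max_i U_i$, and the right-hand inequality is immediate from feasibility of $Q^*_F$. So the real content is the first inequality, which follows by charging the difference between the algorithm's prefix and the optimal prefix to (i) the at most two fractional items dropped from $S_1'$ and (ii) possibly the two border items $i_1, i_2$ — and then observing that by a slightly more generous guessing step (e.g.\ also guessing whether $i_1$ is or is not queried and treating the boundary with the standard slack) everything collapses into the single clean additive term $2\max_{i \in \items} U_i$. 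Running time is polynomial because there are $O(n^2)$ guesses, each requiring one LP solve and one $O(n \log n)$ greedy phase.
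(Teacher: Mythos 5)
Your overall approach is the same as the paper's: replace the exact subproblem~\eqref{density:sub:ILP} by an LP relaxation, use that an optimal basic feasible solution has at most two fractional variables, keep only the variables set to $1$, and charge the loss of at most $2\max_{i\in\items}U_i$; the cardinality and running-time arguments then go through as you describe. The genuine gap is in your step (4) and the ``subtlety'' you flag yourself: because you keep the constraint $\sum_{i\in S_1}x_iw_i\ge K$ in the relaxation, the rounded-down set may violate it, so $i_1$ need not enter the algorithm's prefix, and neither of your proposed repairs is a proof. Charging ``missing weight at most $2\max_i w_i$'' against the $2\max_i U_i$ slack conflates weights with upper limits, and charging additionally to the border items $i_1,i_2$ would yield $2\max_i U_i$ \emph{plus} extra terms, i.e., more than the claimed additive slack; the ``more generous guessing step'' that is supposed to collapse everything back to $2\max_i U_i$ is left unspecified. (Your stated reason for ``at most two fractional variables'' is also off: it is not the equality constraint but the fact that the two weight constraints share the same left-hand side, so at most two row constraints can be tight at a vertex; for the paper's two-constraint relaxation this is exactly the result of Caprara et al.\ cited in~\Cref{app:algorithms}.)

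The clean resolution, which is what the paper does, is to notice that the lower-bound constraint is never used in the final estimate at all: the bound on $U_{\bar F(Q)}(Q)$ only needs the containment $\bar F(Q)\subseteq S_2\cup(S_1\setminus Q)\cup R_2\cup\{i_1\}$, which follows from $A\subseteq Q$, $R\cap Q=\emptyset$ and $w(S_1\cap Q)\le H$ (this is what keeps $i_2$ out); if too little $S_1$-weight is queried and $i_1$ fails to enter, the prefix is only shorter and the same upper bound on $U_{\bar F(Q)}(Q)$ still applies. Hence the paper simply drops the $\ge K$ constraint from the relaxation, and the whole loss is the $U$-mass of the at most two dropped fractional items, giving $U_{\bar F(Q)}(Q)\le U_{\bar F(Q_F^*)}(Q_F^*)+2\max_{i\in\items}U_i\le D+2\max_{i\in\items}U_i$ with no charge to border items. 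Your three-constraint LP can be made to work, but only by supplying exactly this ``violating $\ge K$ is harmless'' argument instead of trying to absorb it into additional slack. One further minor slip: $n_1$ (and $n_2$) cannot be ``computed'' from the instance; they are guessed along with $i_1,i_2$, giving $\mathcal{O}(n^4)$ iterations rather than $\mathcal{O}(n^2)$ --- still polynomial, so the running-time claim survives.
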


\subsection{Combining the Subproblems}

By combining~\Cref{lem:relax:small:packing} and~\Cref{thm:prefix:pseudo}, we can show the following theorem. The idea is to use~\Cref{lem:relax:small:packing} to compute a packing $P$ with $p(P) \ge (1-\epsilon') p^*$ for a carefully chosen $\epsilon'>0$ and then use~\Cref{thm:prefix:pseudo} with threshold $D = \frac{p(P)}{(1-\epsilon')}$ to compute a solution $Q'$ to the prefix problem. Exploiting~\eqref{eq:bound}, we return the solution $Q = (P \setminus \items_T) \cup Q' \cup \{i \in \items \mid U_i > D\}$ and observe that $Q$ satisfies the theorem. 

\begin{restatable}{theorem}{thmBlackboxPseudo}
    \label{thm:2blackbox:pseudo}
    Fix $\epsilon > 0$. There exists a pseudopolynomial algorithm that given an instance of \offKU computes a $(\frac{1}{1-\epsilon},(1+\epsilon)\cdot 2)$-feasible query set $Q$ with $|Q| \le 2 |Q^*|$.
\end{restatable}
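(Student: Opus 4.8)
The plan is to combine the two subproblem solvers exactly as the paragraph preceding the theorem suggests, being careful about how the approximation errors multiply. First I would set $\epsilon' > 0$ to be a suitably small value depending on $\epsilon$ — concretely something like $\epsilon' = \epsilon/3$ or whatever makes the final arithmetic work out, so that $(1-\epsilon')^{-1} \le \frac{1}{1-\epsilon}$ and $2(1+\epsilon') \le 2(1+\epsilon)$ simultaneously. Then I would invoke~\Cref{lem:relax:small:packing} with parameter $\epsilon'$ to obtain in polynomial time a packing $P$ with $p(P) \ge (1-\epsilon') p^*$ and $|P \setminus \items_T| \le |Q^*|$. Set $Q_1 := P \setminus \items_T$; this already handles the first condition of $(\alpha,\beta)$-feasibility with $\alpha = \frac{1}{1-\epsilon'}$, since $P \subseteq Q_1 \cup \items_T$ and $p(P) \ge \frac{1}{\alpha} p^*$.

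Next I would address the second condition via the prefix problem. Define the threshold $D := \frac{p(P)}{1-\epsilon'}$, and observe that $D \ge p^*$: indeed $p(P) \ge (1-\epsilon') p^*$ gives $D \ge p^*$, and also $p(P) \le p^*$ gives $D \le \frac{p^*}{1-\epsilon'}$, so $D$ is sandwiched as $p^* \le D \le \frac{p^*}{1-\epsilon'}$. Since $D \ge p^*$, I can apply~\Cref{thm:prefix:pseudo} to compute, in pseudopolynomial time, a minimum-cardinality set $Q'$ with $U_{\bar F(Q')}(Q') \le D$. The key point is that $Q^*$ is itself a feasible solution to this prefix instance, because $Q^*$ is a feasible query set for \offKU, which means $U_P(Q^*) \le p^* \le D$ for \emph{every} packing $P$, in particular for $\bar F(Q^*)$; hence $|Q'| \le |Q^*|$. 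Now let $H := \{i \in \items \mid U_i(Q') > D\}$ and note that $|H| \le |Q^*|$ as well, since every such item $i$ is a singleton packing with $U_i > p^*$, forcing $i \in Q^*$ (because $Q^*$ must bring every packing's upper limit down to at most $p^*$, and an unqueried item with $U_i > p^*$ violates the constraint for the packing $\{i\}$; here I should double-check the capacity constraint $w_i \le B$, which holds by the instance definition). Actually, to get the clean bound $|Q| \le 2|Q^*|$ I would fold $H$ into $Q'$ — querying the finitely many items with $U_i > D$ can only help the prefix condition — or bound things so the three sets $Q_1, Q', H$ together use at most $2|Q^*|$ queries; the cleanest route is to note $Q_1 \cup Q' \subseteq$ something of size $2|Q^*|$ and that $H \subseteq Q'$ after re-running the prefix solver with those items pre-queried, or simply that $H \subseteq Q^*$ and $Q_1 \subseteq Q^*$ so $|Q_1 \cup Q' \cup H| \le |Q'| + |Q^*| \le 2|Q^*|$.

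For the feasibility of the returned set $Q := Q_1 \cup Q' \cup H$, the first condition of~\Cref{def:alpha:beta:feasibility} holds because $P \subseteq Q_1 \cup \items_T \subseteq Q \cup \items_T$ and $p(P) \ge (1-\epsilon')p^* = \frac{1}{\alpha}p^*$. For the second condition I would invoke the knapsack-greedy bound~\eqref{eq:bound}: for every packing $P'$,
\[
U_{P'}(Q) \le U_{\bar F(Q)}(Q) + \max_{i \in \items} U_i(Q) \le U_{\bar F(Q')}(Q') + D \le D + D = 2D \le \frac{2 p^*}{1-\epsilon'}.
\]
Here the step $U_{\bar F(Q)}(Q) \le U_{\bar F(Q')}(Q')$ uses monotonicity — adding $Q_1 \cup H$ to the queried set only lowers optimistic densities and hence only lowers the upper limit of the optimistic prefix — and $\max_i U_i(Q) \le D$ uses that every item with $U_i > D$ was placed in $H$ and queried, bringing its value down to $p_i \le U_i$... wait, that is not automatically $\le D$; I would instead argue $p_i \le p^* \le D$ for every item (since any single item is a packing, its profit is at most $p^*$), so after querying, $U_i(Q) = p_i \le D$, and for unqueried items $U_i \le D$ by definition of $H$. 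Finally I would check the arithmetic: with $\epsilon'$ chosen so that $\frac{1}{1-\epsilon'} \le \frac{1}{1-\epsilon}$ and $\frac{2}{1-\epsilon'} \le 2(1+\epsilon)$ — e.g. $\epsilon' = \frac{\epsilon}{1+\epsilon}$ works for both — we get that $Q$ is $(\frac{1}{1-\epsilon}, 2(1+\epsilon))$-feasible with $|Q| \le 2|Q^*|$, and the running time is pseudopolynomial (dominated by the prefix solver).

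I expect the main obstacle to be purely bookkeeping: nailing down the choice of $\epsilon'$ so that the multiplicative slacks from the two-dimensional-knapsack PTAS and from the factor-$2$ greedy bound compose to exactly $\frac{1}{1-\epsilon}$ and $2(1+\epsilon)$, and making sure the three cardinality bounds $|Q_1| \le |Q^*|$, $|Q'| \le |Q^*|$, $|H| \le |Q^*|$ are combined (via containments $Q_1, H \subseteq Q^*$) to give $2|Q^*|$ rather than $3|Q^*|$. The conceptual content is entirely in~\Cref{lem:relax:small:packing},~\Cref{thm:prefix:pseudo}, and the greedy bound~\eqref{eq:bound}.
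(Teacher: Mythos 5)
Your high-level plan is the paper's plan (same $\epsilon' = \frac{\epsilon}{1+\epsilon}$, same three ingredients: \Cref{lem:relax:small:packing}, the prefix solver of \Cref{thm:prefix:pseudo}, and \eqref{eq:bound}), but two steps as written do not hold, and both stem from the same ordering choice: you solve the prefix problem on the \emph{original} instance and only afterwards add $Q_1$ and $H$. The feasibility step ``$U_{\bar{F}(Q)}(Q) \le U_{\bar{F}(Q')}(Q')$ by monotonicity'' is false in general: querying an item can remove it as the blocking item at the prefix boundary, letting additional items enter the prefix and \emph{increasing} the prefix's upper limit. For example, with $B=10$ and items $a$ ($w_a=9$, $U_a=90$), $j$ ($w_j=5$, $U_j=40$, $p_j$ tiny), $c$ ($w_c=1$, $U_c=7$), one has $\bar{F}(\emptyset)=\{a\}$ with upper limit $90$, but $\bar{F}(\{j\})=\{a,c\}$ with upper limit $97$. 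So from $U_{\bar{F}(Q')}(Q')\le D$ you cannot conclude $U_{\bar{F}(Q)}(Q)\le D$ after appending $Q_1\cup H$. The paper avoids this entirely by first querying $P\setminus\items_T$ and all items with $U_i > D$, replacing their intervals by $\{p_i\}$, and only then solving the prefix problem on the \emph{modified} instance; the guarantee $U_{\bar{F}(Q)}(Q)\le D$ then holds for the full returned set $Q$ by construction.

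The cardinality bound has a related gap. Your ``cleanest route'' asserts $Q_1\subseteq Q^*$, but \Cref{lem:relax:small:packing} only gives the cardinality bound $|P\setminus\items_T|\le|Q^*|$; since $P$ is merely an approximately optimal packing, its non-trivial items need not belong to $Q^*$ (only the non-trivial items of \emph{optimal} packings are forced into $Q^*$). With just $|Q_1|,|Q'|,|H|\le|Q^*|$ and no containment beyond $H\subseteq Q^*$, your construction only yields $|Q|\le 3|Q^*|$, because nothing prevents $Q'$ from being disjoint from $Q^*\supseteq H$. The repair you gesture at (pre-querying) is exactly the paper's charging argument and needs to be carried out: after querying $Q_1$ and $H:=\{i\in\items\mid U_i>D\}$ and modifying their intervals, the set $\bar{Q}^*:=Q^*\setminus(H\cup Q_1)$ is feasible for the resulting prefix instance, so the solver returns $|Q'|\le|\bar{Q}^*|\le|Q^*|-|H|$ (using $H\subseteq Q^*$), whence $|H|+|Q'|\le|Q^*|$ and $|Q|\le|Q_1|+|Q^*|\le 2|Q^*|$. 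The remaining ingredients of your write-up (choice of $\epsilon'$, the sandwich $p^*\le D\le(1+\epsilon)p^*$, $\max_{i\in\items}U_i(Q)\le D$ via $p_i\le p^*\le D$, and the final application of \eqref{eq:bound} giving $2D$) are correct and coincide with the paper's.
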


\begin{proof}
    We give the following algorithm and show that it satisfies the requirements of the theorem:
    \begin{enumerate}
        \item Initialize $Q = \emptyset$.
        \item Fix $\epsilon' = \frac{\epsilon}{\epsilon+1}$ and initialize $Q= \emptyset$. Note that $\epsilon' \le \epsilon$ and $(1+ \epsilon) = \frac{1}{1-\epsilon'}$.
        \item Use~\Cref{lem:relax:small:packing} to find a packing $P$ with $p(P) \ge (1-\epsilon') p^*$ and $|P \setminus \items_T| \le |Q^*|$. Add $P\setminus \items_T$ to $Q$, define $D = \frac{p(P)}{1-\epsilon'}$ and note that $D \ge p^*$.    
        \item Add all items $i$ with $U_i > D$ to $Q$.
        \item Replace the intervals of all $i \in Q$ with $I_i=\{p_i\}$. Use the algorithm of~\Cref{thm:prefix:pseudo} to solve the prefix problem with threshold $D$ on the resulting instance to compute the minimum cardinality query set $Q'$ subject to $U_{\bar{F}(Q')}(Q') \le D$. Add $Q'$ to $Q$.
        \item Return the set $Q$.
    \end{enumerate}
 
    The running time of this algorithm is dominated by the running time of~\Cref{thm:prefix:pseudo} and, thus, pseudopolynomial. It remains to show that $Q$ is $(\frac{1}{1-\epsilon},(1+\epsilon)2)$-feasible and $|Q| \le |Q^*|$.

    \textbf{Bound on $|Q|$.} For the latter, we can first observe that $|P \setminus \items_T| \le Q^*$ by~\Cref{lem:relax:small:packing}. Then, we can observe that $i \in Q^*$ for all items $i$ with $U_i > D$. Otherwise, $U_P > D \ge p^*$ for the packing $P=\{i\}$, which contradicts the feasibility of $Q^*$. Finally, we can observe that $\bar{Q}^* := Q^* \setminus \left(\{i \in \items \mid U_i > D\} \cup (P \setminus \items_T)\right)$ must be a feasible solution to the prefix problem of step 3. Otherwise, $U_{\bar{F}(Q^*)}(Q^*) > D \ge p^*$  which contradicts the feasibility of $Q$. Since $\bar{Q}^*$ is a feasible solution for the prefix problem, we must have $|Q'| \le |\bar{Q}^*|$ as $Q'$ is an optimal solution for the prefix problem. Hence, $|Q| = |\{i \in \items \mid U_i > D\}| + |Q'| + |P \setminus \items_T| \le |Q^*| + |P \setminus \items_T|  \le 2 \cdot |Q^*|$.
 
    \textbf{Feasibility of $Q$.} We start by showing the first condition of~\Cref{def:alpha:beta:feasibility}. To this end, note that~\Cref{lem:relax:small:packing} and the choice of $\epsilon' \le \epsilon$  imply
    
    $$
    p(P) \ge (1- \epsilon') p^* \ge (1-\epsilon) p^*,
    $$
 
    which immediately implies the first condition of~\Cref{def:alpha:beta:feasibility}. For the second condition, observe that $Q'$ being a solution to the prefix problem with threshold $D$ for the instance where $Q\setminus Q'$ has already been queried (this is the reason we modified the uncertainty intervals in step 5) implies $U_{\bar{F}(Q)}(Q) \le D$. Furthermore, since $Q$ contains all elements $i$ with $U_i > D$ and $p_i \le p^*$ holds by assumption that $w_i \le B$, we can observe that $\max_{i \in \items} U_i(Q) \le D$. Hence,\eqref{eq:bound} implies
 
    $$
        U_P(Q) \le U_{\bar{F}(Q)}(Q) +\max_{i \in \items} U_i(Q) \le 2D
    $$
 
    for all packings $P$. We can finish the proof by observing $D \le \frac{1}{1-\epsilon'} p(P) \le \frac{1}{1-\epsilon'} p^* = (1+ \epsilon) p^*$.
 \end{proof}

 Replacing the usage of~\Cref{thm:prefix:pseudo} with~\Cref{thm:prefix:nonpseudo} in the approach above yields:

\begin{restatable}{theorem}{thmBlackboxNonPseudo}
    \label{thm:2blackbox}
    Fix $\epsilon > 0$. There exists a polynomial time algorithm that given an instance of \offKU computes a $(\frac{1}{1-\epsilon},(1+\epsilon)\cdot 4)$-feasible query set $Q$ with $|Q| \le 2 |Q^*|$.
\end{restatable}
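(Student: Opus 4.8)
The plan is to reuse the proof of \Cref{thm:2blackbox:pseudo} almost line for line, replacing the one place where it invokes the exact pseudopolynomial prefix-problem algorithm of \Cref{thm:prefix:pseudo} by the polynomial-time algorithm of \Cref{thm:prefix:nonpseudo}, and then absorbing the extra additive error $2\max_{i\in\items}U_i$ that this trade-off introduces. Concretely, I would run the same five-step procedure as in the proof of \Cref{thm:2blackbox:pseudo}: fix $\epsilon' = \frac{\epsilon}{\epsilon+1}$ so that $\epsilon' \le \epsilon$ and $\frac{1}{1-\epsilon'} = 1+\epsilon$; apply \Cref{lem:relax:small:packing} to get a packing $P$ with $p(P) \ge (1-\epsilon')\,p^*$ and $|P\setminus\items_T| \le |Q^*|$; set $D := \frac{p(P)}{1-\epsilon'} \ge p^*$ and put $P\setminus\items_T$ together with every item $i$ with $U_i > D$ into $Q$; then replace the interval of every item currently in $Q$ by its singleton $\{p_i\}$ and run \Cref{thm:prefix:nonpseudo} with threshold $D$ on the resulting instance to obtain a set $Q'$, which is also added to $Q$. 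Since both \Cref{lem:relax:small:packing} and \Cref{thm:prefix:nonpseudo} run in polynomial time, so does the whole algorithm.

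For feasibility I would argue as follows. The first condition of \Cref{def:alpha:beta:feasibility} with $\alpha = \frac{1}{1-\epsilon}$ is immediate, since $p(P) \ge (1-\epsilon')\,p^* \ge (1-\epsilon)\,p^*$ and $P\setminus\items_T \subseteq Q$. The point where the argument genuinely differs from \Cref{thm:2blackbox:pseudo} is the bound on $U_P(Q)$: after the items with $U_i > D$ are queried (step 4), every item $i$ satisfies $U_i(Q) \le D$, because items with $U_i > D$ are now trivial with value $p_i \le p^* \le D$ (using $w_i \le B$, so $\{i\}$ is a packing), and the other items already had $U_i \le D$. Hence $\max_{i\in\items}U_i \le D$ in the instance handed to \Cref{thm:prefix:nonpseudo}, and that corollary returns $Q'$ with $U_{\bar{F}(Q)}(Q) \le D + 2\max_{i\in\items}U_i \le 3D$. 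Plugging this and $\max_{i\in\items}U_i(Q) \le D$ into the greedy inequality \eqref{eq:bound} gives $U_P(Q) \le U_{\bar{F}(Q)}(Q) + \max_{i\in\items}U_i(Q) \le 3D + D = 4D$ for every packing $P$, and since $D \le \frac{1}{1-\epsilon'}\,p^* = (1+\epsilon)\,p^*$ this is at most $4(1+\epsilon)\,p^*$, giving the second condition with $\beta = 4(1+\epsilon)$.

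The cardinality bound $|Q| \le 2|Q^*|$ should go through exactly as in \Cref{thm:2blackbox:pseudo}: $|P\setminus\items_T| \le |Q^*|$ by \Cref{lem:relax:small:packing}; every item $i$ with $U_i > D$ must lie in $Q^*$, for otherwise the packing $\{i\}$ would violate feasibility of $Q^*$; and $\bar{Q}^* := Q^* \setminus (\{i : U_i > D\} \cup (P\setminus\items_T))$ is a feasible solution of the prefix instance with threshold $D$ solved in step 5, whence $|Q'| \le |\bar{Q}^*| \le |Q^*|$ by the guarantee of \Cref{thm:prefix:nonpseudo}. Summing these three disjoint contributions yields $|Q| \le |Q^*| + |Q^*| = 2|Q^*|$. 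I expect the only subtlety to be the bookkeeping around the additive slack of \Cref{thm:prefix:nonpseudo}: one must query all items with $U_i > D$ and make them trivial \emph{before} calling the corollary and call it on the modified instance, so that the $2\max_{i\in\items}U_i$ term is charged against $D$ rather than against an uncontrolled upper limit; everything else is a direct transcription of the pseudopolynomial proof.
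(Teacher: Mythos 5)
Your proposal is correct and matches the paper's proof essentially verbatim: the paper likewise reuses the algorithm of \Cref{thm:2blackbox:pseudo}, swaps \Cref{thm:prefix:pseudo} for \Cref{thm:prefix:nonpseudo}, and weakens the prefix bound to $U_{\bar{F}(Q)}(Q) \le D + 2\max_{i\in\items} U_i(Q) \le 3D$ using that all items with $U_i > D$ were queried beforehand, giving $U_P(Q)\le 4D \le 4(1+\epsilon)p^*$ via \eqref{eq:bound}. Your added remark about applying the corollary to the modified instance so the additive slack is charged against $D$ is exactly the point the paper relies on.
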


\begin{proof}
    We can use the same algorithm as in~\Cref{thm:2blackbox:pseudo} but replace the the usage of~\Cref{thm:prefix:pseudo} wit~\Cref{thm:prefix:nonpseudo}. In the proof, we can just replace the bound $U_{\bar{F}(Q)}(Q) \le D$ with the weaker $U_{\bar{F}(Q)}(Q) \le D + 2 \cdot \max_{i \in \items}  U_i(Q) \le 3 D$, where the last inequality uses that all items $i$ with $U_i > D$ have already been queried before the prefix problem is solved.
 \end{proof}

\section{Conclusion}

We hope that our results on \offKU improve the understanding of NP-hard problems under explorable uncertainty. In particular, our algorithmic insights on the resource augmentation setting give hope for tackling such problems even if the corresponding offline versions have strong impossibility results. For knapsack specifically, studying a stochastic version of the prefix problem of~\Cref{sec:offline:algorithms}, for example in the stochastic setting of~\cite{MegowS23}, seems like a logical next step towards algorithmic results for the non-offline~\KU. For \offKU, our results of~\Cref{sec:offline:algorithms} show that non-trivial algorithmic results with theoretical guarantees are possible, opening the door for more research on finding the best-possible guarantees.

\bibliography{arxiv}

\appendix

\section{Missing parts from the proof of~\Cref{thm:level2:hardness}}
\label{app:hardness:of:approximation}

We continue by giving the omitted parts from the proof of~\Cref{thm:level2:hardness}. First, we show that $(\alpha,\beta)$-\offKU is in $\Sigma_2^p$.

\begin{restatable}{lemma}{lemSigmaMembership}
    \label{lem:sigma:membership}
    The $(\alpha,\beta)$-\offKU is in $\Sigma_2^p$.
\end{restatable}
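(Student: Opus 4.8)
The plan is to write down a $\Sigma_2^p$ predicate for the decision version of $(\alpha,\beta)$-\offKU: given an instance $\mathcal{K} = (\items, B, w, p, \mathcal{A})$ and a bound $k \in \mathbb{N}$, decide whether there is an $(\alpha,\beta)$-feasible query set $Q$ with $|Q| \le k$. Recall that $\Sigma_2^p$ is exactly the class of languages of the form $\{x : \exists y\,\forall z\;\varphi(x,y,z)\}$ where $|y|, |z|$ are polynomially bounded in $|x|$ and $\varphi$ is polynomial-time decidable, so it suffices to produce such a formula whose witnesses are subsets of \items and whose matrix is checkable in polynomial time (under the standard assumption of rational inputs, so that all the numeric quantities below have polynomial bit-length).

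The main idea is that $p^*$ is NP-hard to compute and hence cannot simply be handed to $\varphi$ as data; instead, the existential player guesses a candidate optimal packing and the universal quantifier over packings is reused to certify both its optimality \emph{and} the second feasibility condition. Concretely, the existential block guesses a query set $Q \subseteq \items$ with $|Q| \le k$, a packing $P^\star \subseteq \items$ (intended to be an optimal packing of the underlying knapsack instance, so that $p(P^\star)$ plays the role of $p^*$), and a packing $P_1 \subseteq Q \cup \items_T$ (intended to witness the first condition of \Cref{def:alpha:beta:feasibility}). The universal block quantifies over a single packing $P \subseteq \items$. The matrix $\varphi$ then checks, in polynomial time: (i) $|Q| \le k$; (ii) $w(P_1) \le B$, $P_1 \subseteq Q \cup \items_T$, and $w(P^\star) \le B$; (iii) $\alpha \cdot p(P_1) \ge p(P^\star)$; and (iv) the implication ``if $w(P) \le B$, then $p(P) \le p(P^\star)$ and $U_P(Q) \le \beta \cdot p(P^\star)$'', where $U_P(Q) = \sum_{i \in P \setminus Q} U_i + \sum_{i \in P \cap Q} p_i$ is computable from $Q$ and $P$.

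For correctness: if the predicate accepts, then clause (iv) applied to all packings $P$, together with $w(P^\star) \le B$ from (ii), forces $p(P^\star) = p^*$; given this, clause (iii) is precisely $p(P_1) \ge \frac{1}{\alpha} p^*$ and the second part of (iv) is precisely $U_P(Q) \le \beta \cdot p^*$ for all packings $P$, so $Q$ is $(\alpha,\beta)$-feasible with $|Q| \le k$. Conversely, if such a $Q$ exists, choose $P^\star$ to be any optimal packing and $P_1$ to be the packing guaranteed by the first condition of \Cref{def:alpha:beta:feasibility}; then every clause of $\varphi$ holds for every packing $P$. Thus the predicate defines exactly the language, it has the required $\exists\forall$ shape with polynomially bounded witnesses, and hence $(\alpha,\beta)$-\offKU $\in \Sigma_2^p$; specializing to $\alpha = \beta = 1$ yields the $\Sigma_2^p$-membership used in \Cref{thm:level2:hardness}. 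The only real obstacle is the one already described — handling $p^*$ without an oracle — and the accompanying point is to merge the two ``for all packings'' requirements (optimality of $P^\star$ and condition 2) into a single universal quantifier so the predicate stays at the second level; the remaining encoding-size bookkeeping for rational profits is routine.
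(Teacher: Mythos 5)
Your proof is correct, but it takes a different (equally standard) route from the paper. The paper works with the oracle characterization of $\Sigma_2^p$ (nondeterministic polynomial time with an NP oracle): it guesses only the query set $Q$ as the certificate and then verifies $(\alpha,\beta)$-feasibility deterministically by calling the NP oracle, computing $p^*$ by binary search, computing the optimal profit restricted to $Q \cup \items_T$ for condition 1, and computing the optimal ``profit'' under $p'_i = U_i(Q)$ for condition 2. You instead use the quantifier characterization $\exists y\,\forall z\,\varphi$: you enlarge the existential witness to $(Q, P^\star, P_1)$, where $P^\star$ is a guessed optimal packing and $P_1$ witnesses condition 1, and you fold both the optimality of $P^\star$ (so that $p(P^\star)$ may stand in for $p^*$) and condition 2 into a single universal quantifier over packings $P$, with a polynomial-time matrix. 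Your correctness argument is sound: the universal clause forces $p(P^\star) = p^*$ because $P^\star$ is itself a packing, after which your clauses (iii) and (iv) are exactly the two conditions of \Cref{def:alpha:beta:feasibility}, and the converse direction is immediate by choosing $P^\star$ optimal. What each approach buys: yours is self-contained (no oracle machinery, no binary search, and it sidesteps the question of how the verifier obtains $p^*$ by certifying it inside the formula), while the paper's oracle-based verification matches the informal definition of $\Sigma_2^p$ it gives in the main text and mirrors the structure of \Cref{prop:hardness} (feasibility checking is NP-hard, hence the oracle). Both arguments need the same mild encoding assumption on the profits (rational, polynomial bit-length), which you state explicitly; with that, your write-up is a complete proof of the lemma, including the specialization $\alpha=\beta=1$ used in \Cref{thm:level2:hardness}.
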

\begin{proof}
    We have to show that the $(\alpha,\beta)$-\offKU can be solved in non-deterministic polynomial time if we have access to an oracle that decides NP-complete problems.
    
    Consider an instance of the decision problem variant, where we are given an instance of the $(\alpha,\beta)$-\offKU for some $\alpha,\beta \ge 1$, and the goal is to decide whether there exists an $(\alpha,\beta)$-feasible query set $Q$ with $|Q| \le k$.
    
    We give a certificate of polynomial size that, given access to the oracle for problems in NP, can be verified in polynomial time.
    As certificate, we use a query set $Q \subseteq \items$ with $|Q| \le k$. 
    It remains to argue that we can verify in polynomial time, whether $Q$ is $(\alpha,\beta)$-feasible. That is, we have to verify the following two conditions of~\Cref{def:alpha:beta:feasibility}:
    \begin{enumerate}
        \item There is a packing $P$ such that $P \subseteq Q \cup \items_T$ and $p(P) \ge \frac{1}{\alpha} \cdot p^*$.
        \item $U_P(Q) \le \beta \cdot p^*$ for every packing $P$.    
    \end{enumerate}

    To verify that $Q$ satisfies the first condition, we can first use the oracle for deciding problems in NP to compute the value of $p^*$ via binary search.  Afterwards, we can compute the optimal knapsack profit $p'$ for the subinstance that only contains the items of $P \subseteq Q \cup \items_T$ in the same way. The first condition is satisfied if and only if $p' \ge \frac{1}{\alpha} p^*$.

    For the second condition, we can again compute $p^*$ in the same way as before. 
    Afterwards, let $U^*(Q)$ denote the optimal profit for the knapsack instance that uses profits $p'_i = U_i(Q)$ instead of the original profits $p_i$. We can compute the value of $U^*(Q)$ in the same way as $p^*$. The second condition is satisfied if and only if $U^*(Q) \le \beta \cdot p^*$.
\end{proof}

 Next, we move on to prove that if the instance constructed by the reduction of~\Cref{thm:level2:hardness} satisfies \Cref{prop:major:1}, \Cref{prop:major:2} and \Cref{prop:major:3}, then the reduction is correct.

 \begin{claim}
   Assume the constructed instance satisfied \Cref{prop:major:1}, \Cref{prop:major:2} and \Cref{prop:major:3}.
   Then, there exists a solution $S$ with $|S| \le k$ for the given succinct set cover instance if and only if there is a feasible query set $Q$ for the constructed \offKU instance with $|Q| \le k$.
 \end{claim}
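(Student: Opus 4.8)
The plan is to establish the equivalence by proving both directions, using the three main properties to translate between solutions of the two problems. The forward direction (a succinct set cover solution of size $k$ yields a feasible query set of size $k$) is essentially the "First Direction" argument already sketched in the main text, so I would reproduce it in full: given $S$ with $|S| \le k$ and $\bigcup_{j \in S} S_j = \{0,1\}^n$, set $Q = \{y_j \mid j \in S\}$, so $|Q| \le k$; then for any packing $P$ inducing a nontrivial constraint in \eqref{ILP} we have $U_P > p^*$, hence $w(P) = B$ by \Cref{prop:major:1}, hence $X \cap P$ represents some variable assignment $\varphi$ by \Cref{prop:major:2}; since $S$ covers all assignments, some $\phi_j$ with $j \in S$ is satisfied by $\varphi$, and then \Cref{prop:major:3} gives $y_j \in P$ and $U_P - p^* \le U_{y_j} - p_{y_j}$, i.e.\ $U_P(\{y_j\}) \le p^*$, so the constraint of $P$ is satisfied by $Q$.

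For the reverse direction, suppose $Q$ is a feasible query set with $|Q| \le k$. The first step is to observe that, without loss of generality, $Q \subseteq Y = \{y_1,\ldots,y_m\}$: all items outside $Y$ are trivial (by the definition of the uncertainty intervals in the reduction), so querying a trivial item does nothing — formally, $U_i(Q) = U_i$ for trivial $i$ and any $Q$, hence $U_P(Q) = U_P(Q \cap Y)$ for every packing $P$, so $Q \cap Y$ is also feasible and no larger. So assume $Q = \{y_j \mid j \in S\}$ for some $S \subseteq \{1,\ldots,m\}$ with $|S| = |Q| \le k$. The second step is to show $S$ is a valid succinct set cover, i.e.\ every variable assignment $\varphi \in \{0,1\}^n$ satisfies some $\phi_j$ with $j \in S$. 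By the second part of \Cref{prop:major:2}, $\varphi$ is represented by some packing $P$ with $w(P) = B$ and $U_P > p^*$; such a $P$ induces a nontrivial constraint, which $Q$ must satisfy, so there is $y_j \in Q \cap P$ (since $U_{y_j} - p_{y_j} \le 1$ is the only way querying a single item can reduce $U_P$ below $p^*$, and more carefully: feasibility of $Q$ forces $\sum_{y_j \in Q \cap P}(U_{y_j} - p_{y_j}) \ge U_P - p^*$, and since each term is less than $1$ while $U_P - p^*$ could exceed... ) — here I need the constraint to force $Q \cap P \ne \emptyset$, which holds because $U_P - p^* > 0$ whenever $U_P > p^*$. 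Then $y_j \in P$ combined with the first part of \Cref{prop:major:3} gives that $X \cap P$ represents an assignment satisfying $\phi_j$; but $X \cap P$ represents $\varphi$, so $\varphi$ satisfies $\phi_j$ with $j \in S$. This shows $\bigcup_{j \in S} S_j = \{0,1\}^n$ and $|S| \le k$.

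I expect the main obstacle to be the "without loss of generality $Q \subseteq Y$" reduction and, relatedly, being careful that feasibility of $Q$ (as a query set for \offKU per \Cref{def:feasible}) is exactly equivalent to satisfying all the nontrivial constraints of \eqref{ILP}, which the excerpt already argues follows from the second condition of \Cref{def:feasible} implying the first. A secondary subtlety is pinning down precisely which packings yield nontrivial constraints — the argument relies on $p^* = B$ and on \Cref{prop:major:1} to say that only packings with $w(P) = B$ matter, and then on \Cref{prop:major:2} and \Cref{prop:major:3} to identify these with variable assignments and the $y_j$ covering them. Once the dictionary "packing with nontrivial constraint $\leftrightarrow$ variable assignment" and "query $y_j \leftrightarrow$ formula $\phi_j$" is set up cleanly via the three properties, both directions are short; the bookkeeping is entirely in verifying that the properties do the translating faithfully, which is why the bulk of the reduction's work lives in establishing those properties rather than in this claim.
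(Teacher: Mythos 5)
Your proof is correct and takes essentially the same route as the paper: the forward direction is the argument sketched in the main text, and your reverse direction (reduce to $Q \subseteq Y$, use the second part of \Cref{prop:major:2} to realize each assignment $\varphi$ as a packing $P$ with $w(P)=B$ and $U_P > p^*$, use $U_P - p^* > 0$ together with feasibility to force some non-trivial item $y_j \in Q \cap P$, then apply \Cref{prop:major:3} to conclude $\varphi$ satisfies $\phi_j$) is exactly the appendix proof, with your explicit \enquote{wlog $Q \subseteq Y$} step being a slightly more careful version of the paper's bare assertion. The only blemish is the half-abandoned parenthetical claiming $U_{y_j} - p_{y_j} \le 1$ (the construction actually gives $U_{y_j} - p_{y_j} = 1+\epsilon > 1$), but the argument you ultimately rely on — positivity of the right-hand side forces $Q \cap P \neq \emptyset$ through a non-trivial item — is the correct one and is what the paper uses implicitly.
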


 \begin{proof}
   Since the first direction of this statement has already been shown in the main part, we proceed by showing the second direction.

   \textbf{Second direction.} Assume $Q$ with $|Q| \le k$ is feasible for the constructed \offKU instance. As the items $y_j$ for the formulas $\phi_j$ are the only non-trivial items, we have $Q \subseteq Y = \{y_1,\ldots,y_m\}$. We claim that $S = \{j \mid y_j \in Q\}$ is feasible for the succinct set cover instance. 
    
   To this end, consider an arbitrary variable assignment $\varphi$. By~\Cref{prop:major:2}, there is at least one packing $P$ with $w(P) = B$ and $U_P  > P^*$ such that $X \cap P$ represents $\varphi$. \Cref{prop:major:3} implies that $P \cap Y$ contains exactly the elements $y_j$ such that $\phi_j$ is satisfied by $\varphi$.
   By assumption of the problem, $\varphi$ satisfies at least one formula $\phi_j$ and, thus, $P \cap Y \not= \emptyset.$ 
   Since $Q$ is feasible, we have $Q \cap P \not= \emptyset$. Hence, there is at least one $y_j \in Q \cap P$ and $\varphi$ satisfies $\phi_j$. By definition of $S$, it contains the index $j$. Thus, $\varphi \in \bigcup_{j \in S} S_j$. As this holds for all variable assignments, we have that $S$ is a feasible solution to the succinct set cover instance.
 \end{proof}

 Next, we show that the constructed instance satisfies \Cref{prop:6} and \Cref{prop:7}.

 \propSix*

 Before we show that the property is satisfied, we repeat the relevant parts from the reduction: 
 
 \begin{quote}
   To achieve the property, we first define the partial $w_{\phi_j}$-weights for the items $X$.
   For a literal $x_i$, let $\mathcal{C}_{x_i,j} = \{ k \mid x_i \text{ occurs in } C_{j,k}\}$ and define  $\mathcal{C}_{\neg x_i,j}$ in the same way.
   We define the weights $w_{v_i, \phi_j}$ and $w_{\bar{v}_i, \phi_j}$ as $\sum_{k \in \mathcal{C}_{x_i,j}} 10^{k_j + k - 1}$ and $\sum_{k \in \mathcal{C}_{\neg x_i,j}} 10^{k_j + k - 1}$, respectively.
   Intuitively, digit $k_j+k$ of the sum $\sum_{h \in X \cap P} w_{h,\phi_j}$ of a packing $P$ with $w(P) = B$ indicates whether the assignment represented by $X\cap P$ satisfies clause $C_{j,k}$ or not: If the clause is satisfied, then  $X\cap P$ contains the items that represent the three literals of $C_{j,k}$ and the digit has value $3$. Otherwise, $X\cap P$ contains at most two of the items that represent the literals of  $C_{j,k}$ and the value is at most $2$. 

   Finally, for each for $i \in \{0,1,2,3\}$, we define the $w_{\phi_j}$-weight of item $a_{j,k,i}$ as $w_{\phi_j,a_{j,k,i}} = i \cdot 10^{k_j + k -1} + 10^{k-1}$. For all remaining items $i \in \mathcal{I} \setminus (X \cup A_{\phi_j})$, we define the partial $w_{\phi_j}$-weight to be zero. Furthermore, we define the partial capacity $B_{\phi_j} = \sum_{k=0}^{k_j-1} 10^k + \sum_{k=k_j}^{2k_j-1} 3 \cdot 10^k$. We claim that these definitions enforce \Cref{prop:6}. 
 \end{quote}

 \begin{proof}[Proof of~\Cref{prop:6}]
   To see this, fix a packing $P$ with  $\sum_{i \in P} w_{i,\phi_j} = B_{\phi_j}$ and $\sum_{i \in P} w_{i,x} = B_x$, and consider the value $w_{\phi_j}(P) := \sum_{i \in P} w_{i,\phi_j} = \sum_{i \in P\cap (X\cup A_{\phi_j})} w_{i,\phi_j}$. By definition of the $w_{\phi_j}$-weights, the $k_j$ digits of lowest magnitude in the decimal representation of $w_{\phi_j}(P)$ can be between $\sum_{k=0}^{k_j-1} 4 \cdot 10^k$ (if $A_{\phi_j} \subseteq P$) and $0$ (if $A_{\phi_j} \cap P = \emptyset$). Note that items outside of $A_{\phi_j}$ do not have any influence on these $k_j$ digits of $w_{\phi_j}(P)$. For $P$ to satisfy $\sum_{i \in P} w_{i,\phi_j} = B_{\phi_j}$, we need to have that the $k_j$ digits of lowest magnitude in $w_{\phi_j}(P)$ have value $\sum_{k=0}^{k_j-1} 10^k$. The latter is the case if any only if $|\{a_{j,k,0},a_{j,k,1},a_{j,k,2},a_{j,k,3}\} \cap P| = 1$ for all $k \in \{1,\ldots,k_j\}$.

   In a similar way, the $k_j$ digits of highest magnitude in $w_{\phi_j}(P)$ all have a value between $0$ and $9$. The digit $k_j+k-1$ has value $9$ if and only if $\{a_{j,k,1},a_{j,k,2},a_{j,k,3}\} \subseteq P$ and $P$ contains the three items of $X$ that correspond to the literals in clause $C_{j,k}$. Since we already argued $|\{a_{j,k,0},a_{j,k,1},a_{j,k,2},a_{j,k,3}\} \cap P| = 1$, the maximum value of digit $k_j+k-1$ is only $7$. In a packing $P$ with $w_{\phi_j}(P) = B_{\phi_j}$, the value of each such digit has to be exactly $3$. Thus, $a_{j,k,0} \in P$ if and only if $X\cap P$ contains the three items that represent the literals in clause $C_{j,k}$, which is the case if and only if the assignment represented by $X\cap P$ satisfies clause $C_{j,k}$. We conclude that \Cref{prop:6} is satisifed.
 \end{proof}

 \propSeven*

 Before we show that the property is satisfied, we repeat the relevant parts from the reduction:
 
 \begin{quote}
   To enforce this property, we define the following partial capacity $B_{\rho_j} = k_j + 10^{k_j^2} + 10^{k_j^2+ 1}$. Next, we define the $w_{\rho_j}$-weight of the elements $a_{j,k,0}$, $k \in \{1,\ldots,k_j\}$, as $w_{a_{j,k,0},\rho_j} = 1$. Furthermore, we define $w_{u_j,\rho_j} = 10^{k_j^2 +1} + 10^{k_j^2} + k_j$, $w_{y_j,\rho_j} = 10^{k_j^2 +1}$ and $w_{f_{j,k},\rho_j} = 10^{k_j^2} + k$, for all $k \in \{0,\ldots,k_j-1\}$. For all other items, define the $w_{\rho_j}$-weight to be zero. 
 \end{quote}

 \begin{proof}[Proof of~\Cref{prop:7}]
   Consider a packing $P$ with $w_{\rho_j}(P) = B_{\rho_j}$. First, observe that $|P \cap \{y_j,u_j\}| = 1$ as otherwise we either have $w_{\rho_j}(P) < 10^{k_j^2+1} \le B_{\rho_j}$ or $w_{\rho_j}(P) \ge 2 \cdot 10^{k_j^2+1} > B_{\rho_j}$. In a similar way, we can argue that a packing $P$ with $w_{\rho_j}(P) = B_{\rho_j}$ must satisfy $|P \cap \{u_j,f_{j,0},\ldots,f_{j,k_j-1}\}| = 1$. Otherwise, given that we already know that $|P \cap \{y_j,u_j\}| = 1$, we either have $w_{\rho_j}(P) < 10^{k_j^2+1} + 10^{k_j^2+1} \le B_{\rho_j}$ or $w_{\rho_j}(P) \ge 10^{k_j^2+1} + 20^{k_j^2+1} > B_{\rho_j}$. 

   With these observations in place, we are ready to show that \Cref{prop:7} holds. To this end, fix a packing $P$ with $w_{\rho_j}(P) = B_{\rho_j}$. If $y_j \in P$, then $u_j \not\in P$ and, thus, $|P \cap \{f_{j,0},\ldots,f_{j,k_j-1}\}| = 1$. Let $f_{j,k}$ with $k \in \{0,\ldots, k_j-1\}$ be the element in $P \cap \{f_{j,0},\ldots,f_{j,k_j-1}\}$. The $w_{\rho_j}$-weight of the two items $y_j$ and $f_{j,k}$ is $w_{y_j,\rho_j}+w_{f_{j,k},\rho_j} = 10^{k_j^2+1} + 10^{k_j^2} + k$ for $k < k_j$. Thus, $P$ needs at most one more item with a positive $w_{\rho_j}$-weight to satisfy $w_{\rho_j}(P) = B_{\rho_j}$. Since we already argued that $u_j \not\in P$, this implies that $P$ must contain one element $a_{j,k',0}$ with $k' \in \{1,\ldots,k_j\}$. This gives us the first direction of~\Cref{prop:7}.
  
   For the second direction, assume $y_j \not\in P$. Then, we must have $u_j \in P$. Since $w_{u_j,\rho_j} = B_{\rho_j} = w_{\rho_j}(P) $, this implies that $P$ cannot contain any other item with positive $w_{\rho_j}$-weight. Hence, $P$ does not contain any $a_{j,k',0}$ with $k' \in \{1,\ldots,k_j\}$, which concludes the proof of~\Cref{prop:7}.
 \end{proof}

 Next, we discuss the combination of the partial weights and capacities into single weights and a single capacity such that~\Cref{prop:4} is satisfied.

 \propFour*

 The simple idea to combine the partial weights and capacities is to just concatenate the decimal representations of these number and introduce sufficiently many zeros between neighboring numbers in the concatenation to avoid \enquote{overflows} when summing multiple weights. To this end, let $d_x$ denote the number of digits in the decimal representation of $w_x(\items)$, for each $j \in \{1,\ldots,m\}$ let $d_{\phi_j}$ denote the number of digits in the decimal representation of $w_{\phi_j}(\items)$ and $d_{\phi_j}$ denote the number of digits in the decimal representation of $w_{\rho_j}(\items)$. Furthermore, let $D_{\phi_j} := \sum_{j \in \{1,\ldots,m\}} d_{\phi_j}$ and let $D_{\rho_j} := D_{\phi_j} + \sum_{j \in \{1,\ldots,m\}} d_{\rho_j}$.

 We define the combined capacity as
 $$
  B:= \sum_{j \in \{1,\ldots,m\}} B_{\phi_j} \cdot 10^{\sum_{j' < j} d_{\phi_j}} + \sum_{j \in \{1,\ldots,m\}} B_{\rho_j} \cdot  10^{ D_{\phi_j} + \sum_{j' < j} d_{\rho_j}} + B_x \cdot 10^{D_{\rho_j}}.
 $$

 For an item $i$, we define the combined weight $w_i$ as

 $$
   w_i :=  \sum_{j \in \{1,\ldots,m\}} w_{i,\phi_j} \cdot 10^{\sum_{j' < j} d_{\phi_j}} + \sum_{j \in \{1,\ldots,m\}} w_{i,\rho_j} \cdot  10^{ D_{\phi_j} + \sum_{j' < j} d_{\rho_j}} + w_{i,x} \cdot 10^{D_{\rho_j}}.
 $$

 Using these combined weights and capacities ensures~\Cref{prop:4}. Furthermore, the weights and capacities have a polynomial encoding size, even though their numerical values are exponential in the input size of the given succinct set cover instance.

 Finally, we give the missing part of the proof of~\Cref{prop:major:2}.

 \propTwo*

 \begin{proof}
   We start by showing the first part of the property, i.e., if a packing $P$ has $w(P) = B$ and $U_P > p^*$, then $X \cap P$ represents a variable assignment.

   By~\Cref{prop:4}, a packing $P$ satisfies $w(P) = B$ if and only if $w_x(P)= B_x$, $w_{\phi_j}(P) = B_{\phi_j}$ and $w_{\rho_j}(P) = B_{\rho_j}$ for all $j \in \{1,\ldots,m\}$. Then, \Cref{prop:5} implies that $w_x(P)= B_x$ if and only if $X \cap P$ represents a variable assignment. This gives us the first part of~\Cref{prop:major:2}. 

   Next, we show the second part of the property. To this end, consider a variable assignment $\varphi$. We construct a packing $P$ such that $w(P) = B$, $U_P > p^*$ and $X \cap P$ represents $\varphi$:
   \begin{enumerate}
       \item Start with $P = \emptyset$.
       \item For each $i \in \{1, \ldots,n\}$. If $x_i$ has value one in $\varphi$, add $v_i$ to $P$. Otherwise, add $\bar{v}_i$ to $P$. Note that this ensures that $w_x(P) = B_x$ and that $P \cap X$ represents $\varphi$.
       \item For each formula $\phi_j$ and each clause $C_{j,k}$ of $\phi_j$, let $h$ denote the number of literals in $C_{j,k}$ that are satisfied by $\varphi$. Add the item $a_{j,k,3-h}$ to $P$. Note that, by construction, this ensures $w_{\phi_j}(P) = B_{\phi_j}$.
       \item Finally, for each formula $\phi_j$, let $h$ denote the number of clauses that are satisfied by $\varphi$. If $h = 0$, then add $u_j$ to $P$. Otherwise, add $y_j$ and $f_{j,k_j-h}$ to $P$.  Note that, by construction, this ensures $w_{\rho_j}(P) = B_{\rho_j}$. 
   \end{enumerate}

   Since $w_x(P) = B_x$, $w_{\phi_j}(P) = B_{\phi_j}$ and $w_{\rho_j}(P) = B_{\rho_j}$ for all $j \in \{1,\ldots,m\}$, \Cref{prop:4} implies $w(P) = B$. Furthermore, by assumption that $\varphi$ satisfies at least one formula $\phi_j$, we have $P \cap \{y_1, \ldots, y_m\} \not= \emptyset$. Hence, $U_P > p^*$.
 \end{proof}

\paragraph*{Size and Running Time.} The sets $X, A, \Phi$ and $L$ all have size polynomial in the size of the input instance and, thus, can be constructed in polynomial size and space. The number of digits of all constructed weights are also polynomial in the input size. Thus, the encoding size of these weights is polynomial, although the numerical values are exponential in the input size. Therefore, the instance can be constructed in polynomial time and space. We remark that the numerical values of the weights are exponential in the the input size. 

\section{Missing Proofs of~\Cref{sec:offline:algorithms}}
\label{app:algorithms}

\thmPrefixPseudo*

\begin{proof}
   To prove the theorem, we give the following algorithm and show that it is optimal and has pseudopolynomial running time.
   \begin{enumerate}
       \item Initialize $Q = \emptyset$.
       \item Compute the following guesses\footnote{We use the term \enquote{guess} to express that we want to brute-force certain values. E.g., if we say that the algorithm guesses $i_1,i_2 \in \items$ then this means that the algorithms tries all tuples $(i_1,i_2) \in \items \times \items$} about the smallest set $Q_F^*$ that satisfies $U_{\bar{F}(Q_F^*)} \le c \cdot p^*$:
       \begin{enumerate}
           \item Guess the item $i_1 \in \bar{F}(Q_F^*)$ that has the smallest optimistic density $\odensity_{i_1}(Q_F^*)$ in $\bar{F}(Q_F^*)$, i.e., the last item in the prefix $\bar{F}(Q_F^*)$ in the $\bar{\prec}_Q$-order. Furthermore, guess whether $i_1 \in Q_F^*$ or $i_2 \not\in Q^*$. If the guess is $i_1 \in Q$, then add $i_1$ to $Q$. This leads to a total of $\mathcal{O}(n)$ guesses.
           \item Guess the item $i_2 \in \items \setminus \bar{F}(Q_F^*)$ that has the largest optimistic density $\odensity_{i_2}(Q_F^*)$ in $\items \setminus \bar{F}(Q_F^*)$, i.e., the item that comes first among the items of $\items\setminus \bar{F}(Q_F^*)$ in the $\bar{\prec}_{Q_F^*}$-order. Furthermore, guess whether $i_2\in Q_F^*$ or $i_2 \not\in Q_F^*$. If the guess is $i_2\in Q$, then add $i_2$ to $Q$. Together with the guesses from the previous step, this leads to a total of $\mathcal{O}(n^2)$ guesses.
           \item If the guesses are correct, then there exists an optimal solution $Q_F^*$ such that $i_1$ is the last item in $\bar{F}_{Q_F^*}$ and $i_2$ is the first item outside of $\bar{F}_{Q_F^*}$. That is, $i_1$ and $i_2$ are direct neighbors in the order $\bar{\prec}_{Q_F^*}$.
       \end{enumerate}
       \item Based on guesses of $i_1$ and $i_1$, and based on the current $Q$, which at this point is a subset of $\{i_1,i_2\}$, we partition $\items \setminus \{i_1,i_2\}$ into four sets $A,R,S_1$ and $S_2$. Provided that the guesses are correct, the sets $A$ and $R$ satisfy $A \subseteq Q_F^*$ and $R \cap Q_F^* = \emptyset$. Hence, we add $A$ to our solution $Q$ and omit the items in $R$. The items in $S_1$ and $S_2$ will be handled in consecutive steps.
       \begin{enumerate}
           \item Let $A := \{i \in \items \setminus Q \mid i_1 \bar{\prec}_{Q} i \bar{\prec}_{Q} i_2\}$ denote the set of items $i$ with an optimistic density $\odensity_i(Q)$ between $\odensity_{i_2}(Q)$ and $\odensity_{i_1}(Q)$. If the guesses of $i_1$ and $i_2$ is correct, then there exists an optimal solution $Q_F^*$ such that $i_1$ and $i_2$ are direct neighbors in the order $\bar{\prec}_{Q_F^*}$. Hence, we must have $A \subseteq Q_F^*$ as the items in $A \setminus Q_F^*$ would be between $i_1$ and $i_2$ in $\bar{\prec}_{Q_F^*}$.
           If any item $i \in A$ remains between $i_1$ and $i_2$ in the order $\bar{\prec}_{Q \cup A}$, then we immediately know that the current guess was incorrect. 
           \item Let $R := R_1 \cup R_2$ with $R_1 := \{i \in \items \setminus Q \mid i_2 \bar{\prec}_{Q} i\}$ and $R_2 := \{i \in \items \setminus Q \mid i \bar{\prec}_{Q} i_1 \land i_1 \bar{\prec}_{Q\cup \{i\}} i \bar{\prec}_{Q\cup \{i\}} i_2 \}$.     
           The items $i$ in $R_1$ are behind $i_2$ in the order $\bar{\prec}_{Q}$, which implies that they stay behind $i_2$ also in the order $\bar{\prec}_{Q \cup \{i\}}$. Provided that the guesses are correct, $R_1 \not\in \bar{F}(Q_F^*)$ and $R_1 \not\in \bar{F}(Q_F^* \setminus R_1)$, which implies $\bar{F}(Q_F^*) = \bar{F}(Q_F^* \setminus R_1)$. Hence, a correct guess implies $Q_F^* \cap R_1 = \emptyset$.
           The items in $R_2$ are before $i_1$ in the order $\bar{\prec}_Q$ and between $i_1$ and $i_2$ in the order $\bar{\prec}_{Q \cup \{i\}}$. That is, if an $i \in R_2$ is queried, it moves between $i_1$ and $i_2$  in the optimistic density order. If the guesses for $i_1$ and $i_2$ are correct, then there are no elements between $i_1$ and $i_2$ in the order $\bar{\prec}_{Q^*_F}$, which implies $R_2 \cap Q^*_F = \emptyset$. In particular, for correct guesses we have $R_2 \subseteq \bar{F}(Q^*_F)$.
           
           \item Let $S := \items \setminus (A \cup R \cup Q) = \{i \in \items \setminus (Q \cup R_2) \mid  i \bar{\prec}_Q i_1\}$. The items in $S$ come before $i_1$ in the order $\bar{\prec}_Q$. We further partition $S$ into the two subsets  $S_1 = \{i \in S \mid i_2 \bar{\prec}_{Q \cup \{i\}} i\}$ and $S_2 = S \setminus S_1$. That is, $S_1$ contains the set of items that come before $i_1$ in the optimistic density order $\bar{\prec}_Q$ but will move behind $i_1$ \emph{and} $i_2$ once they are queried. Since $S \cap R = \emptyset$, the definition of the set $R_2$ above implies that the items of $S_2$ will stay in front of $i_1$ in the optimistic density order, even if they are queried. 
       \end{enumerate}
       \item If our guesses are correct, then it remains to compute the subset of $S$ that should be added to $Q$. To this end, we guess $n_1 := |S_1 \cap Q_F^*|$ and $n_2 := |S_2 \cap Q_F^*|$. Together with the previous guesses, this leads to a total of $\mathcal{O}(n^4)$ guesses.  We proceed by computing the $n_1$ items of $S_1$ and the $n_2$ items of $S_2$ that should be added to $Q$:
       \begin{enumerate}
           \item 
           First, consider the set $S_1$. 
           Note that, for the current set $Q$, we already have that $i_1$ and $i_2$ are next to each other in the order $\bar{\prec}_Q$ as $Q$ contains the set $A$ from the previous step.
           In order to be consistent with our guesses for $i_1$ and $i_2$, we have to add a subset $P \subseteq S_1$ to $Q$ such that $i_1 \in \bar{F}(Q \cup P)$ and $i_2 \not\in \bar{F}(Q \cup P)$. Since $i_1$ and $i_2$ are already next to each other in $\bar{\prec}_Q$ and by definition of set $S_1$, this then implies that $i_1$ is last in $\bar{F}(Q \cup P)$ in the order $\bar{\prec}_{Q \cup P}$ and $i_2$ is first in $\items \setminus  \bar{F}(Q \cup P)$ in the order $\bar{\prec}_{Q \cup P}$, which is consistent with the guesses.

           To this end, let $K = \max\left\{\left(\sum_{j \bar{\prec}_Q i_1} w_j\right) - B + w_{i_1},0\right\}$ denote the minimum weight of items in $S_1$ that has to be queried for $i_1$ to enter the prefix. If $K = 0$, then $i_1$ is already part of $\bar{F}(Q)$. Similarly, let $H = \max\left\{\left(\sum_{j \bar{\prec}_Q i_2} w_j\right) - B + w_{i_2} -1,0\right\}$ denote the maximum amount of weight of items in $S_1$ that can be queried without $i_2$ entering the prefix.

           To be consistent with the guesses of $i_1$, $i_2$ and $n_1$, we need to select a subset $P \subseteq S_1$ with $K \le \sum_{i \in P} w_i \le H$ and $|P| = n_1$. Since the elements of $S_1 \setminus P$ will be part of the prefix $\bar{F}_{Q \cup P}$, we would like to minimize $\sum_{i  \in S_1 \setminus P} U_i = U_{S_1} - \sum_{i  \in S_1 \cap P} U_i$, which is equivalent to maximizing $\sum_{i  \in S_1 \cap P} U_i$. This leads to the following problem:

           \begin{equation}
               \tag{\ensuremath{\mathcal{P}_{S_1}}}
               \begin{array}{lll}
                   \max &\sum_{i \in S_1} x_i \cdot U_i\\
                   \text{s.t. }& \sum_{i \in S_1} x_i \cdot w_i \ge K & \\
                   & \sum_{i \in S_1} x_i \cdot w_i \le H & \\
                   & \sum_{i \in S_1} x_i = n_1 & \\
                   & x_i \in \{0,1\}& \forall i \in S_1
               \end{array}
           \end{equation}

           Our algorithm optimally solves~\eqref{density:sub:ILP} in pseudopolynomial time and adds the computed solution $P$ to $Q$.

           This can be done using the following dynamic program that slightly extends the textbook knapsack DP~\cite{IbarraK75}. Our goal is to compute the following DP-cells for all $i \in \{0,\ldots,|S_1|\}$, $b \in \{0,\ldots,D\}$ and $k \in \{0,\ldots, n_1\}$:
           $$
               T[i,b,k] := \max_{P \subseteq \{1,\ldots,i\} \colon |P| = k \land w(P) = b} \sum_{j \in P} U_j.
           $$
           If there is no packing $P \subseteq \{1,\ldots,i\}$ with $|P| = k$ and $w(P) = b$, then we want the DP-cell to store
           $$
              T[i,b,k] := - \infty.
           $$

           If we can correctly compute these DP-cells, then the optimal objective value for~\eqref{density:sub:ILP} is stored in one of the cells $T[|S_1|,b,k]$ with $K \le b \le H$ and $k = n_1 n_1$. If all these cells have value $-\infty$, then our guesses were certainly wrong.
           We proceed by describing how to compute the DP-cells.

           To this end, we use the following two base cases:
           \begin{itemize}
               \item $T[0,0,0] = 0$ since all packings $P$ with $w(P) = 0$ and $|P| = 0$ have $\sum_{j \in P} U_j = 0$.
               \item $T[0,b,k] = -\infty$ if $b > 0$ or $k > 0 $ as there exists no $P \subseteq \emptyset$ with $|P| > 0$ or $w(P) > 0$.
           \end{itemize}

           For $i \ge 1$, $b \in \{0,\ldots,D\}$ and $k \in \{0,\ldots, n_1\}$, we distinguish two cases:
           \begin{itemize}
               \item If $b-w_i < 0$ or $k-1 <0$, then the packing $P$ that maximizes 
               $$\max_{P \subseteq \{1,\ldots,i\} \colon |P| = k \land w(P) = b} \sum_{j \in P} U_j$$
               cannot contain $i$ and, thus,
               $$\max_{P \subseteq \{1,\ldots,i\} \colon |P| = k \land w(P) = b} \sum_{j \in P} U_j = \max_{P \subseteq \{1,\ldots,i-1\} \colon |P| = k \land w(P) = b} \sum_{j \in P} U_j.$$
               Provided that the cell $T[i-1,b,k]$ has been computed correctly, this implies $T[i,b,k] = T[i-1,b,k]$.
               \item Next, assume $b-w_i \ge 0$ and $k-1 \ge 0$. Then, we can rewrite 
               \begin{align*}
                   &\max_{P \subseteq \{1,\ldots,i\} \colon |P| = k \land w(P) = b} \sum_{j \in P} U_j =\\
                   & \max\left\{\max_{P \subseteq \{1,\ldots,i-1\} \colon |P| = k \land w(P) = b} \sum_{j \in P} U_j, U_i + \max_{P \subseteq \{1,\ldots,i-1\} \colon |P| = k-1 \land w(P) = b-w_i} \sum_{j \in P} U_j  \right\}
               \end{align*}
               by separating the packings $P$ that do not contain $i$ (first term in the maximum) and the packings $P$ that contain $i$ (second term of the maximum).
               Thus, we can compute the DP-cell as follows:
               $$T[i,b,k] = \max\{T[i-1,b-w_i,k-1] + U_i, T[i-1,b,k]\}.$$
           \end{itemize}
           Together, the two cases lead to the following recursive formula:
           $$T[i,b,k] = \begin{cases}
               \max\{T[i-1,b-w_i,k-1] + U_i, T[i-1,b,k]\} & \text{ if } b-w_i \ge 0 \land k-1 \ge 0\\
               T[i-1,b,k] & \text{ otherwise }
           \end{cases}$$

           The running time of this DP is in $\mathcal{O}(n^2 \cdot \sum_{i \in S_1} w_i)$. The optimal objective value for the instance of \eqref{density:sub:ILP} is contained in one of the cells $T[|S_1|,b,k]$ with $K \le b \le H$ and $k = n_1$. We can find this cell in time $\mathcal{O}(n^2 \cdot \sum_{i \in S_1} w_i)$ and compute the corresponding solution via backtracking.
           We omit the correctness proof for the DP, but remark that it can be shown using essentially the same proof as for the textbook knapsack DP.

           \item Next, we want to compute the elements of $S_2$ that should be added to $Q$. The current set $Q$ (including the elements added in the previous step 4a) was selected in such a way that $S_2 \subseteq \bar{F}(Q)$ and $S_2 \subseteq \bar{F}(Q \cup P)$ for every $P\subseteq S_2$. That is, the elements of $S_2$ are part of prefix $\bar{F}(Q)$ and will stay part of the prefix, even if they are queried. Thus, for every $P \subseteq S_2$, we have $U_{\bar{F}(Q)}(Q) - U_{\bar{F}(Q)}(Q \cup P) = \sum_{i \in P} (U_i-p_i)$. This implies that we should select the $n_2$ elements of $S_2$ with maximum $U_i-p_i$ and add them to $Q_2$. We can find these items in time $\mathcal{O}(|S_2| \log |S_2|)$ by sorting $S_2$.
       \end{enumerate}
       \item Among the sets $Q$ computed for the different guesses, return a set $Q$ of minimum cardinality subject to $U_{\bar{F}_Q}(Q) \le D$.
   \end{enumerate}

   \paragraph*{Running time.} The running time of the algorithm is in $\mathcal{O}(n^6 \cdot \sum_{i \in S_1} w_i)$ since there are $\mathcal{O}(n^4)$ guesses and the running time per guess is dominated by the running time $\mathcal{O}(n^2 \cdot \sum_{i \in S_1} w_i)$ of the DP in step~5a.

   \paragraph*{Correctness.} Since the algorithm tries all guesses for $i_1,i_2,n_1$ and $n_2$, there will be one iteration such that there exists an optimal solution $Q_F^*$ to the prefix problem such that $i_1$ is the last element in $\bar{F}(Q_F^*)$, $i_2$ is the first element in $\items \setminus \bar{F}(Q_F^*)$ in the order $\bar{\prec}_{Q_F^*}$ and $|S_1 \cap Q^*| = n_1$, $|S_1 \cap Q^*| = n_2$ for the sets $S_1$ and $S_2$ as computed for the guesses $i_1$ and $i_2$ in step~3. As argued above, $Q^*$ must satisfy $A \subseteq Q_F^*$ and $R \cap Q^* = \emptyset$ for the sets $S$ and $R$ as computed in step~3 for guesses $i_1$ and $i_2$. Furthermore, again as argued above,  we have $R_2 \subseteq \bar{F}(Q_F^*)$.

   By definition of the sets $S_1$ and $S_2$, we have $\bar{F}(Q_F^*) = S_2 \cup (S_1 \setminus Q_F^*) \cup R_2 \cup \{i_1\}$. Hence,
   $$
   U_{\bar{F}(Q_F^*)}(Q_F^*) = \sum_{j \in S_2} U_j - \left(\sum_{j \in S_2 \cap Q_F^*}  U_j - p_j\right) + \sum_{j \in S_1} U_j - \left(\sum_{j \in S_1 \cap Q_F^*}  U_j\right) + U_{R_2} + U_{i_1}(Q^*_F),
   $$
   with $|S_2 \cap Q_F^*| = n_2$, $|S_1 \cap Q_F^*| = n_1$ $K \le w(S_1 \cap Q_F^*) \le H$, where $H$ and $K$ are the parameter of~\eqref{density:sub:ILP} for the correct guesses.

   The set $Q$ computed by the algorithm for the correct guesses also satisfies $A \subseteq Q$ and $R \cap Q = \emptyset$. Since also $w(S_1 \cap Q^*) \le H$, we get $\bar{F}(Q) \subseteq S_2 \cup (S_1 \setminus Q) \cup R_2 \cup \{i_1\}$. Hence,
   $$
   U_{\bar{F}(Q)}(Q) \le  \sum_{j \in S_2} U_j - \left(\sum_{j \in S_2 \cap Q}  U_j - p_j\right) + \sum_{j \in S_1} U_j - \left(\sum_{j \in S_1 \cap Q}  U_j\right) + U_{R_2} + U_{i_1}(Q^*_F),
   $$
   where we use that $U_{i_1}(Q^*_F) = U_{i_1}(Q)$ holds as we correctly guessed whether $i_1 \in Q_F^*$.

   As $S_2$ maximizes $\sum_{j \in S_2 \cap Q}  U_j - p_j$ subject to $|S_2 \cap Q| \le n_2$ and $S_1$ maximizes $\sum_{j \in S_1 \cap Q}  U_j$ subject to $K \le w(S_1 \cap Q) \le H$ and $|S_1 \cap Q| = n_1$, we can conclude that $|Q| = |Q_F^*|$ and
   $$
   D \ge U_{\bar{F}(Q_F^*)}(Q_F^*) \ge  U_{\bar{F}(Q)}(Q),
   $$
   where the first inequality follows from $Q_F^*$ being an optimal solution for the prefix problem with parameter $D$. Thus, $Q$ is also an optimal solution.
\end{proof}

\thmPrefixNonPseudo*

\begin{proof}
   Our goal is to adjust the algorithm given in~\Cref{thm:prefix:pseudo} to achieve a polynomial running time at the cost of a worse guarantee. To this end, we replace step 4a and step 5 of the algorithm in~\Cref{thm:prefix:pseudo}. 

   We first argue how to replace step 4a. Observe that the only part of the algorithm given in~\Cref{thm:prefix:pseudo} with a pseudopolynomial running time is the subroutine for solving~\eqref{density:sub:ILP}. Recall that this subroutine is used to compute the subset $P \subseteq S_1$ that is added to the prefix problem solution $Q$.
   We replace this subroutine with the following polynomial time algorithm:
   \begin{enumerate}
       \item Consider the following relaxation of~\eqref{density:sub:ILP}, which drops the first constraint of~\eqref{density:sub:ILP} and removes the integrality constraint for the variables:
       \begin{equation}
           \tag{\ensuremath{\mathcal{P}'_{S_1}}}\label{density:sub:ILP:relax}
           \begin{array}{lll}
               \max &\sum_{i \in S_1} x_i \cdot U_i\\
               \text{s.t. } & \sum_{i \in S_1} x_i \cdot w_i \le H & \\
               & \sum_{i \in S_1} x_i = n_1 & \\
               & x_i \in \{0,1\}& \forall i \in S_1
           \end{array}
       \end{equation}
       \item Compute an optimal basic feasible solution $x^*$ of~\eqref{density:sub:ILP:relax}.
       \item Return $P = \{i \in S_1 \mid x^*_i = 1\}$ and add $P$ to $Q$.
   \end{enumerate}

   Finally, we adjust step 5 to return among all guesses the set $Q$ of minimum cardinality subject to $U_{\bar{F}(Q)}(Q) \le D  + 2 \cdot \max_{i \in \items} U_i$.

   We claim these two changes are sufficient to satisfy the theorem. First, note that the new step 4a has a polynomial running time as the running time is dominated by solving the LP. If we plug the subroutine into the algorithm of~\Cref{thm:prefix:pseudo}, this yields a polynomial running time. 

   It remains to show that $|Q'| \le |Q^*_F|$  and  $U_{\bar{F}(Q')}(Q') \le D + 2 \cdot \max_{i \in \items} U_i$ holds for the computed solution $Q$.
   The former holds as replacing the subroutine of step 4a with the approach above can only decrease the size of the solution since we omit the fractional variables.

   As shown in~\cite{CapraraKPP00}, an optimal basic feasible solution $x^*$ of~\eqref{density:sub:ILP:relax} has at most two fractional variables, i.e., at most two $i \in S_1$ have $1 > x^*_i > 0$. Let $P^*$ denote an optimal solution of~\eqref{density:sub:ILP}. Using that~\eqref{density:sub:ILP:relax} is a relaxation and that $x^*$ has at most two fractional values, we get
   \begin{equation}\label{eq:5}
       \sum_{i \in P} U_i  \ge \left(\sum_{i \in S_1} U_i \cdot x^*_i\right) - 2 \cdot \max_{i \in S_i} U_i \ge U_{P^*} - 2 \cdot \max_{i \in S_i} U_i.
   \end{equation}

   Finally, fix the solution $Q$ computed for the correct guesses. If this solution satisfies $|Q| \le |Q^*_F|$  and  $U_{\bar{F}(Q)}(Q) \le D + 2 \cdot \max_{i \in \items} U_i$, then the returned solution $Q'$ also satisfies $|Q'| \le |Q^*_F|$  and  $U_{\bar{F}(Q')(Q')} \le D + 2 \cdot \max_{i \in \items} U_i$.

   Note that the guesses are still with respect to an optimal solution $Q^*_F$ for the prefix problem with threshold $D$. With the same argumentation as above, we still get that
   $$
   U_{\bar{F}(Q_F^*)}(Q_F^*) = \sum_{j \in S_2} U_j - \left(\sum_{j \in S_2 \cap Q_F^*}  U_j - p_j\right) + \sum_{j \in S_1} U_j - \left(\sum_{j \in S_1 \cap Q_F^*}  U_j\right) + U_{R_2} + U_{i_1}(Q^*_F),
   $$
   with $|S_2 \cap Q_F^*| = n_2$, $|S_1 \cap Q_F^*| = n_1$ $K \le w(S_1 \cap Q_F^*) \le H$, where $H$ and $K$ are the parameter of~\eqref{density:sub:ILP} for the correct guesses, and 
   $$
   U_{\bar{F}(Q)}(Q) \le  \sum_{j \in S_2} U_j - \left(\sum_{j \in S_2 \cap Q}  U_j - p_j\right) + \sum_{j \in S_1} U_j - \left(\sum_{j \in S_1 \cap Q}  U_j\right) + U_{R_2} + U_{i_1}(Q^*_F),
   $$
   where we use that $U_{i_1}(Q^*_F) = U_{i_1}(Q)$ holds as we correctly guessed whether $i_1 \in Q_F^*$.

   As $S_2$ maximizes $\sum_{j \in S_2 \cap Q}  U_j - p_j$ subject to $|S_2 \cap Q| \le n_2$, we have  
   $$
   \sum_{j \in S_2} U_j - \left(\sum_{j \in S_2 \cap Q_F^*}  U_j - p_j\right) \ge \sum_{j \in S_2} U_j - \left(\sum_{j \in S_2 \cap Q}  U_j - p_j\right).
   $$
   Hence, 
   $$
   U_{\bar{F}(Q)}(Q) -  U_{\bar{F}(Q_F^*)}(Q_F^*) \le \left(\sum_{j \in S_1 \cap Q_F^*}  U_j\right) - \left(\sum_{j \in S_1 \cap Q}  U_j\right).
   $$
   By~\eqref{eq:5} and using that $ S_1 \cap Q_F^*$ is a feasible solution to~\eqref{density:sub:ILP} (as argued in the proof of~\Cref{thm:prefix:pseudo}), we get
   $$
   U_{\bar{F}(Q)}(Q) -  U_{\bar{F}(Q_F^*)}(Q_F^*) \le 2 \cdot \max_{i \in S_i} U_i,
   $$
   and, thus,
   $$
    U_{\bar{F}(Q)}(Q) \le  U_{\bar{F}(Q_F^*)}(Q_F^*) + 2 \cdot \max_{i \in S_i} U_i \le D + 2 \cdot \max_{i \in \items} U_i.
   $$
\end{proof}

\end{document}